\newtheorem{remark}{Remark}
\theoremstyle{plain}
\newtheorem{definition}{Definition}
\theoremstyle{plain} 
\newtheorem{theorem}{Theorem}
\newtheorem{lemma}{Lemma}
\newtheorem{problem}{Problem}
\title{Influence Maximization on  Temporal Networks with Persistent and Reactive Behaviors}
\author{Aaqib Zahoor \footnote{Corresponding Author}}
\author{Iqra Altaf Gillani}
\author{Janib ul Bashir}
\affil{Department of Information Technology, NIT Srinagar, India}
\affil{\{aaqib\_phaite003, iqraaltaf, janibbashir\}@nitsri.ac.in}
\date{}
\begin{document}

\maketitle

\begin{abstract}
Influence Maximization (IM) aims to identify a set of $k$ seed nodes in a network that maximizes information spread, with applications in viral marketing, epidemic containment, and behavioral targeting. While most prior work assumes static networks, real-world networks are inherently temporal and shaped by user behaviors such as temporary disengagement, reactivation, and reinforcement from repeated exposures. To address these dynamics, we introduce \textit{cpSI-R}, a novel diffusion model that jointly captures temporal evolution and behavioral persistence. The model allows nodes to temporarily deactivate and later reactivate under peer influence, while also reinforcing adoption likelihood through repeated interactions. We prove that the resulting influence spread function is monotone and submodular, making it amenable to efficient greedy approximation. Building on this model, we develop an efficient influence maximization framework featuring a structure-aware temporal sampling technique.

We validate cpSI-R through empirical comparisons with classical diffusion models, demonstrating its consistent ability to achieve greater influence spread across temporal snapshots. Comprehensive experiments on various datasets show that our strategy significantly outperforms state-of-the-art baselines in both influence spread and runtime, with particularly strong gains on large-scale networks. These results highlight the practical utility of cpSI-R and our scalable IM framework in modeling and leveraging realistic diffusion dynamics over evolving social systems.
\end{abstract}

\section{Introduction}

Social networks serve as critical platforms for the rapid dissemination of information, profoundly shaping public discourse, marketing strategies, and political engagement. With the growing reliance on online platforms for communication, understanding and predicting influence propagation has become a focal point of research. A prominent application is word-of-mouth marketing, where businesses strategically select seed users to initiate a diffusion process that maximizes product adoption. By leveraging social influence, companies aim to amplify engagement and optimize outreach, ensuring that a small, carefully chosen set of individuals can drive large-scale adoption of ideas or products.

The influence maximization (IM) problem, formalized by Kempe et al. \cite{kempe}, seeks to identify the most influential subset of users in a network to maximize the expected spread of influence. Since its introduction, IM has been extensively studied, leading to advancements in adaptive seeding strategies, influence estimation techniques, and scalable optimization methods \cite{viralmarketing}\cite{adaptiveim}\cite{scalableim}\cite{tbcelf}\cite{Influentialnodetracking}. However, while classical IM frameworks excel in static networks, real-world social networks are inherently dynamic—characterized by evolving connections, interactions, and trends. These temporal dynamics significantly impact seed selection strategies, necessitating models that account for the evolving nature of influence spread. Although recent studies have addressed this limitation in online settings \cite{online1}\cite{online2}\cite{online3}\cite{online4}, the challenge of optimal seed selection in temporal networks persists.

A key limitation of classical diffusion models is their assumption that once a node becomes inactive or fails to activate, it cannot be influenced again. These models treat influence spread as a one-shot, irreversible process, overlooking the reality that individuals often cycle through phases of engagement, disengagement, and reactivation. For example, in political campaigns, a user may initially engage by sharing content, later lose interest, and then re-engage following a major debate or controversy. Similarly, in public health efforts, individuals who ignore early vaccine awareness messages may later change their stance due to peer influence or updated scientific information. These scenarios highlight that influence propagation is not a static or linear event but a dynamic, evolving process shaped by repeated exposure and reactivation

The inclusion of \emph{reactivation} and \emph{reinforcement dynamics} introduces additional complexity in influence estimation. As nodes oscillate between active and inactive states, the expected spread function exhibits variations that must be meticulously modeled. This behavior complicates optimal seed selection and increases computational costs \cite{azaouzi2021new} \cite{aral2018social}. Furthermore, influence propagation in temporal networks often deviates from classical IM assumptions, making optimization tasks more challenging. Effective seed selection requires models that account for persistence, reactivation, and reinforcement effects to capture influence propagation with higher fidelity.

To address these challenges, we introduce the \emph{Continuous Persistent Susceptible-Infected Model with Reinforcement and Re-activation} (cpSI-R), which explicitly models influence persistence, node reactivation, and reinforcement-driven spread. Unlike conventional models that assume one-time activation, cpSI-R captures the cumulative effect of multiple exposures over time, allowing nodes to regain influence potential upon reactivation. The model extends classical approaches by incorporating three key behavioral dynamics: influence persistence, where influence from an active node is sustained over time; reactivation, where previously inactive nodes may become active again through peer influence; and reinforcement, where repeated exposures strengthen the likelihood of adoption. These enhancements enable cpSI-R to more accurately reflect real-world diffusion processes, where engagement is not necessarily one-time or permanent but often involves multiple phases of activity.The model's formulation ensures monotonicity and submodularity, enabling computationally efficient seed selection strategies with provable guarantees on influence spread while accommodating dynamic re-engagement patterns in evolving networks.

Given the model's ability to capture realistic patterns of influence including persistence, reactivation, and reinforcement, a natural objective is to identify a set of initial seed nodes that can maximize the spread of influence over a given time horizon. This leads to the formal definition of the influence maximization on temporal networks as:

\begin{problem}(Influence Maximization on Temporal Networks)
Given a temporal network $\mathcal{G}$, a diffusion model $\mathcal{M}$, budget $k \in \mathbb{N}$, and spread function $\sigma$, the Influence Maximization (IM) problem is to find a seed set $S_0 \subseteq V$ such that $|S_0| \leq k$ and
\[
S_0^* = \arg\max_{S_0 \subseteq V, |S_0| \leq k} \sigma_{\mathcal{M}}(S_0, \mathcal{G}).
\]
\end{problem}

To address this problem, we present a novel optimization framework that combines a principled seed selection strategy with an adaptive temporal snapshot sampling technique for dynamic networks. The seed selection component leverages the monotonicity and submodularity properties of the spread function to enable efficient approximation under a budget constraint. In parallel, the temporal sampling module adaptively selects representative snapshots by detecting structural shifts in the network, such as changes in edge activity or node interaction patterns, rather than relying on uniform time slicing. This structural awareness ensures that the sampled snapshots preserve both local and global diffusion relevant dynamics, allowing for accurate estimation of influence spread with significantly reduced computational cost.

Our approach not only improves scalability by avoiding redundant computations across temporally similar network states but also enhances the quality of seed selection by focusing on temporally salient moments of influence. Extensive empirical evaluations on several real world temporal datasets, including both medium scale and large scale social contact networks, demonstrate that our framework consistently outperforms existing baseline methods in terms of influence spread and runtime efficiency. In addition, the cpSI-R diffusion model alone exhibits a substantial advantage over traditional models by achieving higher infection spread across temporal snapshots, effectively capturing persistent and reactive behavioral patterns. The improvements offered by both the model and the seed selection strategy become more significant as the network size and temporal resolution increase, highlighting the practical strength of our approach for realistic and large scale diffusion settings.

\subsection{Our Contributions}
Building on the cpSI-R model and the proposed optimization pipeline, we consolidate our technical contributions aimed at advancing influence maximization in temporal networks. Our work not only models influence dynamics with greater behavioral and temporal accuracy, but also introduces efficient algorithms that are both theoretically justified and practically scalable. Below, we summarize the main contributions of this study:

\begin{itemize}
    \item We analyze the role of active-inactive node transitions such as temporary disengagement and reactivation, and demonstrate their substantial impact on the overall influence spread in evolving networks.
    
    \item We propose \textbf{cpSI-R}, a novel diffusion model that integrates both temporal evolution and behavioral aspects, including influence persistence and repeated reinforcement. The resulting influence function is proven to be monotone and submodular, enabling the use of efficient greedy approximation algorithms for seed selection.
    
    \item We develop an optimised influence computation strategy that leverages submodularity to avoid redundant evaluations, thereby significantly improving computational efficiency during seed selection.
    
    \item We design a structure-aware adaptive temporal sampling technique that preserves both local and global temporal dependencies. This sampling enhances the fidelity of diffusion estimation and scales effectively to large temporal networks.
    
    \item We assess the effectiveness of our method using extensive experiments on real-world temporal datasets. Our approach consistently achieves superior influence spread and runtime efficiency compared to leading baseline methods, with especially notable improvements on large-scale networks.
\end{itemize}

\subsection{Organization}
The remainder of this paper is structured as follows: Section \ref{related_work} reviews related work in the domain of influence maximization and temporal networks. Section \ref{preliminaries} outlines the necessary preliminaries. Section \ref{active_inactive} presents the discussion on information diffusion in active-inactive environments. Section \ref{proposed_approach} discusses our proposed approach followed by section \ref{experiments} that discussed experimental setup and results. Finally, Section \ref{discuss} and \ref{conclusion}  concludes the paper and outlines directions for future research.

\section{Related Work}
\label{related_work}
Influence maximization (IM) has been extensively studied since its inception by Kempe et al.~\cite{kempe}, where it was formulated as selecting a set of seed nodes to maximize the expected spread of influence under a chosen diffusion model. The problem is NP-hard, but if the objective function satisfies monotonicity and submodularity, approximation guarantees can be achieved. Monotonicity implies that adding nodes to a seed set does not reduce its influence spread, while submodularity reflects diminishing marginal gains. These properties have been proven for classical models, enabling the use of greedy algorithms that iteratively select nodes with the highest marginal gain~\cite{kempe}. Although exact solutions are computationally intractable, the greedy approach offers a $(1 - 1/e)$ approximation ratio and remains a widely adopted baseline, especially under diffusion models like Independent Cascade (IC), Linear Threshold (LT), and Susceptible-Infected-Recovered (SIR). Unlike these early models, our cpSI-R framework is tailored for temporal networks and designed to retain both monotonicity and submodularity even in the presence of active-inactive transitions, ensuring theoretical guarantees for greedy optimization.

To address the computational challenges of greedy algorithms, various strategies have been developed~\cite{li2018influence}. Simulation-based methods use Monte Carlo simulations for accuracy but at high computational cost~\cite{rw15,rw17}, while proxy-based heuristics improve scalability at the cost of precision~\cite{rw18}\cite{rw19}\cite{rw20}\cite{rw21}\cite{rw22}\cite{rw23}\cite{rw24}\cite{rw25}\cite{rw26}. Sketch-based methods, such as Reverse Influence Sampling (RIS)~\cite{rw27}\cite{rw28}\cite{rw29}\cite{rw30}, provide theoretical guarantees with better efficiency. Simulation-based methods like CELF~\cite{rw15} and CELF++~\cite{rw16} optimize marginal gain computations to reduce simulation calls. Heuristic approaches using degree centrality, PageRank~\cite{rw18}, shortest paths~\cite{rw19}, and Collective Influence (CI)~\cite{rw20} identify influential nodes based on structural indicators, although they often overestimate influence due to overlapping neighborhoods. Enhancements like Degree Discount~\cite{rw21} and Group PageRank~\cite{rw22} refine these estimates by adjusting for redundancy. Meanwhile, sketch-based methods construct random reverse reachable (RR) sets to guide node selection, combining accuracy and scalability. Our approach sidesteps the simulation bottleneck by introducing an optimized forward influence computation strategy that avoids redundant operations and leverages temporal sparsity, outperforming traditional heuristic and sampling-based approaches in dynamic settings.

The choice of diffusion model critically affects the quality and efficiency of IM solutions. Classical models such as IC~\cite{kempe}, LT~\cite{sp4}, and SIR~\cite{kempe} provide the foundational assumptions of discrete influence propagation. However, these models often fall short in capturing complex real-world behaviors, prompting the development of extensions. The SEIR model~\cite{sp5} adds an ``exposed" state to simulate latency, while SCIR~\cite{sp2} captures multi-layered interactions. The irSIR~\cite{sp1} emphasizes recovery mechanisms, and ESIS~\cite{sp6} incorporates emotional contagion. Notably, the Fractional SIR model~\cite{sp3} employs fractional calculus to model memory effects, offering a closer approximation to observed diffusion patterns in realistic settings.In contrast, cpSI-R introduces reinforcement via repeated exposures and time-limited activity phases, providing a unified and realistic abstraction that encapsulates persistence, forgetting, and reactivation without losing computational tractability.

Despite these advances, most IM models and algorithms are restricted to static networks, where node and edge structures are fixed. This assumption is inadequate for real-world systems such as social, communication, and epidemiological networks, which evolve over time. Research in temporal influence maximization is comparatively limited but growing. Some efforts extend static methods like Degree Discount~\cite{rw21}, CI~\cite{rw20}, and RIS~\cite{rw28} to dynamic networks by incorporating time-varying properties. For instance, Dynamic Degree Discount and Dynamic CI leverage evolving node neighborhoods~\cite{imd4}, and temporal RIS adapts RR-set generation across snapshots~\cite{imd1}. These methods improve efficiency but may struggle to maintain high influence spread. Adaptive approaches such as the greedy algorithm under Dynamic Independent Cascade~\cite{imd5} offer theoretical guarantees by adjusting to changing propagation probabilities. Other methods like DIM~\cite{imd2} sample and compress multi-temporal graphs to estimate node reachability, while probing-based strategies approximate seeds with partial network views~\cite{imd6}. Incremental seed updates based on heuristics and upper bounds~\cite{imd3, imd7} further reduce computational costs. A comprehensive taxonomy of these methods is provided by Yanchenko et al.~\cite{yanchenko2024influence}, classifying approaches into single, sequential, maintenance, and probing-based seeding. Our cpSI-R model differs fundamentally from these in its design: it preserves essential theoretical properties in a temporal setting, and our method avoids repeated RR-set constructions or compression schemes, instead directly exploiting temporal patterns and snapshot similarity for sampling efficiency.

Still, influence maximization in temporal networks inherits the NP-hardness of the static case~\cite{kempe}, and the greedy strategy, while extendable~\cite{imd9}, becomes costlier due to repeated simulations and assumptions of full knowledge of the network and model parameters. More critically, diffusion models like SIR, which are monotone and submodular in static graphs, may lose these properties in temporal settings. When node reactivation or recovery (e.g., $\mu > 0$) is introduced, monotonicity and submodularity often break down~\cite{imd9}, weakening the performance of greedy algorithms. Our cpSI-R model overcomes this barrier by enforcing monotonicity and submodularity even with recovery and reactivation, making it the first temporal diffusion model (to our knowledge) that retains these desirable optimization properties.

To better align diffusion modeling with temporal dynamics, researchers have introduced models that incorporate delayed activation, repeated influence attempts, memory, and evolving topologies. Gayraud et al.~\cite{imd9} proposed the Evolving Linear Threshold (ELT) model, including transient (tELT) and persistent (pELT) variants. While tELT considers influence within a single time step, pELT accumulates influence over time and preserves submodularity. Similarly, Kim et al.~\cite{imd10} introduced Continuous-Time IC (CT-IC), relaxing the one-shot activation constraint. Zhu et al.~\cite{msp1} used a Continuous-Time Markov Chain (CTMC-ICM) for finer-grained influence estimates. Zhang et al.~\cite{zhangs} proposed a general cascade model with set-dependent activation probabilities, enhancing realism while maintaining order-independence. Yang et al.~\cite{msp2} introduced t-IC to model multiple activation attempts with timing constraints. Gayraud et al. further developed EIC with transient and persistent versions, where pEIC ensures monotonicity and submodularity under constant probabilities~\cite{imd9}. Hao et al.~\cite{msp3} contributed the Time-Dependent Comprehensive Cascade (TCC) model, adapting activation based on past failures. Aggarwal et al.~\cite{msp4} modeled time-sensitive information retention, and Fu et al.~\cite{imd11} proposed ICEL, which incorporates edge-specific probabilities and dynamic similarity-based updates. cpSI-R generalizes many of these ideas by explicitly modeling (i) the decaying influence over time, (ii) reinforcement via repeated contact, and (iii) time-bounded activation windows—features not simultaneously addressed by prior models. Moreover, we empirically demonstrate that cpSI-R yields significantly higher influence spread in fewer computations, making it both scalable and effective.

To this end, we present a comprehensive framework that bridges the gap between realistic diffusion modeling and scalable optimization for influence maximization in temporal networks. Our proposed model, cpSI-R, incorporates key behavioral and temporal factors such as limited duration activation, cumulative reinforcement through repeated exposures, and the reactivation of previously inactive nodes—dynamics that are often neglected in classical models. To complement this, we introduce an optimized influence maximization algorithm that capitalizes on temporal and structural sparsity to minimize redundant computations, while a structure aware adaptive snapshot sampling technique ensures fidelity in representing network evolution. Collectively, these components yield a principled, efficient, and scalable solution for influence maximization, demonstrating particularly strong performance on large scale and rapidly evolving social networks.

\section{Preliminaries}
\label{preliminaries}

This section defines the fundamental assumptions and notations for influence maximization in dynamic temporal networks. Key concepts such as diffusion models, influence functions, and their properties like monotonicity and submodularity are formally introduced to set the groundwork for the proposed method.

\begin{definition}(Temporal Network)
A temporal network is a dynamic graph $\mathcal{G} = (V, E_T)$, where $V$ is a set of nodes and $E_T \subseteq V \times V \times \mathbb{T}$ is the set of timestamped edges. Each edge $(u, v, t) \in E_T$ represents an interaction between nodes $u$ and $v$ at time $t$, with $\mathbb{T} = \{t_0, t_1, \dots, t_L\}$ denoting the discrete observation time window.
\end{definition}

\begin{definition}(Snapshot Graph and Aggregated Graph)
A snapshot graph at time $t$ is $G_t = (V, E_t)$ where $E_t = \{(u,v) \mid (u,v,t) \in E_T\}$. The time-aggregated graph is defined as $G_{\text{agg}} = (V, E_{\text{agg}})$, where $E_{\text{agg}} = \{(u,v) \mid \exists t : (u,v,t) \in E_T\}$.
\end{definition}

\begin{definition}(Temporal Path)
A temporal path from node $u$ to $v$ is a sequence of timestamped edges $(u_0, u_1, t_1),\\
(u_1, u_2, t_2), \dots, (u_{m-1}, v, t_m)$ such that $t_1 < t_2 < \dots < t_m$. Node $v$ is reachable from $u$ if such a temporal path exists.
\end{definition}

\begin{definition}(Diffusion Model)
A diffusion model $\mathcal{M}$ defines how information (or influence) propagates through the network over time. At each time step, a node may change its state based on rules specific to the model (e.g., Independent Cascade, Linear Threshold).
\end{definition}

\begin{definition}(Seed Set)
The seed set $S_0 \subseteq V$ denotes the initially active or infected nodes at time $t_0$.
\end{definition}

\begin{definition}(Expected Influence Spread)
Given a temporal network $\mathcal{G}$ and a diffusion model $\mathcal{M}$, the expected influence spread of seed set $S_0$ is defined as:
\[
f_{\mathcal{M}}(S_0, \mathcal{G}) = \mathbb{E}_{\mathcal{M}} \left[ \left| \sigma(S_0, \mathcal{G}) \right| \right],
\]
where $\sigma(S_0, \mathcal{G})$ is the random set of nodes activated by the end of the diffusion process.
\end{definition}

\begin{definition}(Monotonicity {\cite{kempe}})
Let $G = (V, E)$ be a social graph. A function $\sigma(S)$ is said to be \emph{monotone increasing} if for any $S \subseteq T \subseteq V$, it holds that:
\[
\sigma(S) \leq \sigma(T).
\]
\end{definition}

\begin{definition}(Submodularity {\cite{kempe}})
Let $G = (V, E)$ be a social graph. A function $\sigma(S)$ is \emph{submodular} if for all $S \subseteq T \subseteq V$ and for any $x \notin T$, the following holds:
\[
\sigma(S \cup \{x\}) - \sigma(S) \geq \sigma(T \cup \{x\}) - \sigma(T).
\]
\end{definition}

\begin{remark}
A node $v \in V$ is said to be \emph{activatable} from $S_0$ under $\mathcal{M}$ if there exists at least one realization of the diffusion process in which $v$ becomes active. Reachability in the temporal graph is a necessary (but not sufficient) condition for activation.
\end{remark}

\begin{table}
  \caption{Notation and Definitions}
  \label{tab:notation}
  \begin{tabular}{cl}
    \toprule
    Notation & Description \\
    \midrule
    $V_1, V_2, \ldots, V_8$ & Node labels used in examples and figures \\
    $\sigma_{\mathcal{M}}(S)$ & Spread function of seed set $S$ under model $\mathcal{M}$ \\
    $S, S'$ & Seed sets for diffusion initiation \\
    $p_{uv}(t_{k}^{uv})$ & Infection probability from node $u$ to $v$ at the $k$-th attempt/time \\
    $\tau$ & Time limit for infection spreading by an infected node \\
    $\delta_u$ & Last infection attempt time by node $u$ \\
    $f(t_k^{uv})$ & Temporal decay function for infection probability \\
    $p_0$ & Base infection probability \\
    $\alpha$ & Reinforcement factor increasing infection probability upon repeated exposure \\
    $\beta, \gamma$ & Scaling and decay parameters in infection probability function \\
    $G(t) = (V(t), E(t))$ & Temporal graph at time $t$ with nodes $V(t)$ and edges $E(t)$ \\
    $L(G_1, G_2)$ & Jaccard similarity between graphs $G_1$ and $G_2$ \\
    $K(G_1, G_2)$ & Kulczynski similarity between graphs $G_1$ and $G_2$ \\
    $S(t)$ & Seed nodes active at time $t$ \\
    $N(t)$ & Neighbor nodes of $S(t)$ at time $t$ \\
    \bottomrule
  \end{tabular}
\end{table}

\section{Information Diffusion with Active-Inactive Transitions}
\label{active_inactive}
 Information diffusion models on temporal networks aim to capture how ideas, behaviors, or diseases propagate through systems where interactions evolve over time \cite{Holme_2012}. These networks allow connections between nodes to appear and disappear, closely reflecting real-world phenomena such as social communications, contact patterns, and online interactions. Standard diffusion models like the Independent Cascade (IC) and Linear Threshold (LT) have been adapted to temporal settings, often including transitions between active and inactive states. However, in many such models, once a node becomes inactive or disengaged, it remains unaffected by future influence attempts—ignoring the reality that individuals can re-engage under sustained social pressure or repeated exposures. This modeling limitation overlooks key behavioral dynamics such as temporary disengagement, reactivation through peer influence, and reinforcement effects that occur in real-world diffusion processes.

To overcome these challenges, we introduce the novel \textit{cpSI-R} model, which integrates temporal evolution with behavioral persistence and reactivation mechanisms. By explicitly modeling how nodes can re-enter the active state after disengagement and how influence accumulates through reinforcement, cpSI-R offers a more realistic and expressive diffusion framework. As a result of these structured transitions, the model naturally recovers monotonicity and submodularity in the spread function, enabling efficient optimization without compromising behavioral fidelity.We begin by formally defining the concept of active-inactive transitions that form the foundation of our modeling framework,and discuss the complete model in the next section.

\begin{definition}
\textbf{(Active-Inactive Transitions in Temporal Diffusion Models)}  
Let \( \mathcal{T} = \{1, 2, \ldots, T\} \) denote the set of discrete time steps, and let \( \mathcal{G} = \{G_t = (V_t, E_t) \mid t \in \mathcal{T} \} \) represent a temporal network. Here, each \( G_t \) is a graph snapshot at time \( t \), where \( V_t \subseteq V \) denotes the set of nodes present at that time, and \( E_t \subseteq V_t \times V_t \) represents the set of active edges at time \( t \). The set \( V \) is the union of all node sets over time, i.e., \( V = \bigcup_{t \in \mathcal{T}} V_t \). Each node \( v \in V \) is associated with a binary state function \( s_v(t) \in \{0, 1\} \), where \( s_v(t) = 1 \) indicates that node \( v \) is active at time \( t \in \mathcal{T} \), and \( s_v(t) = 0 \) indicates that it is inactive. In temporal diffusion models that incorporate active-inactive transitions, a node can switch from the active to the inactive state after some time, typically due to lack of sufficient influence from its neighbors or based on specific rules defined by the model. A common assumption in irreversible diffusion frameworks is that once a node becomes inactive, it cannot become active again. Formally, if a node \( v \in V \) is active at time \( t_1 \in \mathcal{T} \) and transitions to the inactive state at a later time \( t_2 \in \mathcal{T} \), where \( t_1 < t_2 \), then it remains inactive at all subsequent time steps. This can be expressed as:
\[
s_v(t_1) = 1 \quad \text{and} \quad s_v(t_2) = 0 \quad \Rightarrow \quad s_v(t) = 0 \quad \forall \, t > t_2.
\]
\end{definition}

This implies that once a node transitions to the inactive state, it cannot return to the active state, thus making its influence spread temporary. As a result, the influence \( \sigma(S) \) of a seed set \( S \subseteq V \) becomes non-monotonic; adding more nodes to \( S \) does not necessarily increase the total influence spread because the duration of influence is constrained by the active state period of each node.

\begin{lemma}
The spread function $\sigma_D(S)$ is neither monotone nor submodular under any diffusion model that allows for only active-inactive transition.
\end{lemma}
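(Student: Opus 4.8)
The plan is to establish both negative claims by explicit counterexample, which suffices because monotonicity and submodularity are each universally quantified inequalities over seed sets: producing a single instance on which the inequality fails refutes the property. I would fix a concrete representative of the class, namely a model in which a node activated at time $t$ remains active throughout a finite window $[t, t+\tau]$ and then transitions \emph{irreversibly} to the inactive state, with one active in-neighbor sufficing to (deterministically) activate a node along an available temporal edge. Both constructions will rely only on two features common to every member of the class: (i)~activity is time-limited, and (ii)~the active$\to$inactive transition is irreversible, so an already-activated node cannot be revived once its window closes. Since nothing beyond (i)--(ii) is invoked, the resulting violations transfer to any diffusion model restricted to active--inactive transitions.

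For non-monotonicity I would build a small temporal network containing a ``gateway'' node $b$ that is the sole relay to a block of downstream nodes $c_1,\dots,c_m$, where the edges $b\to c_i$ are all active at a single late time $t^{*}$. Under seed set $S$, the influence wave reaches $b$ at a time for which $b$ is still active at $t^{*}$, so all of $c_1,\dots,c_m$ are activated. I then introduce a seed $x$ carrying a fast edge to $b$ that activates $b$ strictly earlier; by the finite window and irreversibility, $b$ is already inactive at $t^{*}$, the edges $b\to c_i$ fail to fire, and the entire downstream block is lost. Choosing $m$ large enough that the lost block outweighs the single added seed yields $\sigma_D(S\cup\{x\}) < \sigma_D(S)$, contradicting monotonicity. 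The mechanism is precisely premature activation followed by irreversible deactivation of a bottleneck.

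For non-submodularity I would exhibit $S\subseteq T$ and an element $x$ with a strictly \emph{increasing} marginal gain, i.e.\ $\sigma_D(S\cup\{x\})-\sigma_D(S) < \sigma_D(T\cup\{x\})-\sigma_D(T)$. The construction routes $x$ toward a large subtree $R$ through a target $g$ that must be reached at a precise late time, while $x$ also possesses a fast path that would otherwise activate $g$ prematurely and, by irreversibility, permanently block $R$. The extra seed present in $T$ but not in $S$ is placed so as to ``neutralize'' this fast path: it activates the fast path's intermediate relay so early that the relay's window expires before it can propagate, thereby clearing the way for $x$'s well-timed slow path to deliver all of $R$. Consequently $x$ in context $S$ blocks itself and contributes little, whereas $x$ in context $T$ delivers $R$; taking $|R|$ sufficiently large makes the marginal gain strictly larger in the larger context, refuting submodularity.

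The main obstacle is the submodularity direction. Plain timing interference between seeds typically only \emph{reduces} spread, and diminished marginal contributions are perfectly consistent with (indeed characteristic of) submodularity, so the non-monotone example does not by itself break it. The delicate part is engineering genuine positive synergy---one seed \emph{enabling} rather than hindering another---using only irreversibility and finite windows, and then verifying the marginal-gain inequality by tracking the activation times of every relay in all four scenarios $S$, $S\cup\{x\}$, $T$, and $T\cup\{x\}$ and tuning $|R|$. I would spend most of the effort confirming that each edge fires exactly when its tail node is active and fails exactly when it is not, since a single mistimed window would invalidate the example.
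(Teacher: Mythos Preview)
Your plan is correct, and the non-monotonicity argument is essentially identical to the paper's: an extra seed activates a relay prematurely, the relay's window closes before the late downstream edges fire, and a whole block is lost.

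For non-submodularity, however, you take a noticeably harder road than the paper. You propose a fresh construction in which the extra seed in $T$ \emph{neutralizes} a fast blocking path so that $x$ can deliver a subtree $R$ only in the larger context---genuine engineered synergy. The paper instead recycles the non-monotonicity example directly: it sets $A=\{V_1\}$, $B=\{V_1,V_3\}$ (so $A\subset B$ with $\sigma_D(A)=6>\sigma_D(B)=4$ already established), and then adjoins a single node $V_8$ whose independent reach overlaps heavily with $A$'s large cascade but recovers nodes that $B$ lost. The marginal gains come out as $7-6=1$ versus $8-4=4$, violating submodularity with no new machinery. Your remark that ``the non-monotone example does not by itself break [submodularity]'' is technically true but overcautious in practice: once you have $A\subset B$ with $\sigma_D(B)<\sigma_D(A)$, any added node whose contribution is mostly redundant with $A$'s successful cascade but substantive relative to $B$'s stunted one will do the job. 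Your neutralization gadget would work, but it is more effort than needed and requires the delicate four-scenario timing verification you flag as the main obstacle; the paper's route avoids that entirely.
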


\begin{proof}
We provide a counterexample as shown in Figure 1 for which the spread function is neither monotone nor submodular if there is a window for an activated node to become inactivated. Assume the probability of activation of a node \( b \) by node \( a \) at time \( t \) is 1, i.e., \( p_{ab}^t = 1 \). In this example, we have a temporal network \( T \) represented by snapshots \( T^1, T^2, T^3, T^4 \).

Figure 1 shows the diffusion process when the seed set is \( S = \{V_1^0\} \). Triangle-shaped nodes represent active or infected nodes at a given time \( t \), round nodes represent those susceptible to infection in future time stamps greater than \( t \), and rectangular nodes represent nodes that have undergone an active-inactive transition. The spread function \( \sigma_D(S) \) calculates the number of nodes infected through the chain of activations until the last snapshot \( T^4 \). The influence obtained by the spread function equals the total number of rectangular and triangular nodes in \( T^4 \), yielding a resulting spread of \( \sigma_D(\{V_1^0\}) = 6 \).

Now, consider the addition of node \( V_3^0 \) to the seed set \( S \). Figure \ref{f1} shows the diffusion process. The activation of node \( V_3 \) at time \( t = 0 \) causes node \( V_4 \) to be activated at \( t = 1 \) (see figure \ref{f1}(a), followed by the active-inactive transition at \( t = 2 \), blocking the spread of infection to \( V_5 \) and other nodes connected by the path \( V_4V_5 \) (see figure \ref{f1}(b). This premature activation of \( V_3 \), leading to the early active-inactive transition of node \( V_4 \), reduces the influence spread. As a result, the spread becomes \( \sigma_D(\{V_1^0, V_3^0\}) = 4 < \sigma_D(\{V_1^0\}) \), proving that \( \sigma_D(S) \) is non-monotone.

We use a similar approach to demonstrate that \( \sigma_D \) is not submodular. Consider two sets \( A = \{V_1^0\} \) and \( B = \{V_1^0, V_3^0\} \), where \( A \subset B \). Adding node \( V_8^0 \) to set \( A \) results in a total influence of 7, as shown in Figure \ref{f2}. Adding the same node to set \( B \) yields a total spread of 8 (see Figure \ref{f2}(b)), i.e., \( \sigma_D(B \cup \{V_8^0\}) = 8 \). Thus, we have:
\[
\sigma(A \cup \{V_8^0\}) - \sigma(A) = 7 - 6 = 1 \quad \text{and} \quad \sigma(B \cup \{V_8^0\}) - \sigma(B) = 8 - 4 = 4.
\]
Since \( A \subset B \), it follows that:
\[
\sigma(A \cup x) - \sigma(A) \not\geq \sigma(B \cup x) - \sigma(B),
\]
which violates the submodularity property.

We conclude that the spread function \( \sigma_D(S) \) is neither monotone nor submodular in temporal networks where nodes exhibit recovery probability. This implies that there is no guarantee on the optimality gap for the solution obtained using a greedy algorithm under the IC, LT, or SIR models. However, for the models that dont allow complete recovery, we can still achieve a solution that is at most \( 1 - \frac{1}{e} \) times away from the optimal solution by applying a greedy method \cite{kempe}. Additionally, computation costs can be reduced using lazy forward optimization \cite{celf++}. In the following subsection, we present an SI-like information diffusion model called the Continuous Persistent SI model with Reinforcement (cpSI-R).
\end{proof}
\begin{figure}[ht]
    \centering
    \subfigure[]{\includegraphics[width=0.45\textwidth]{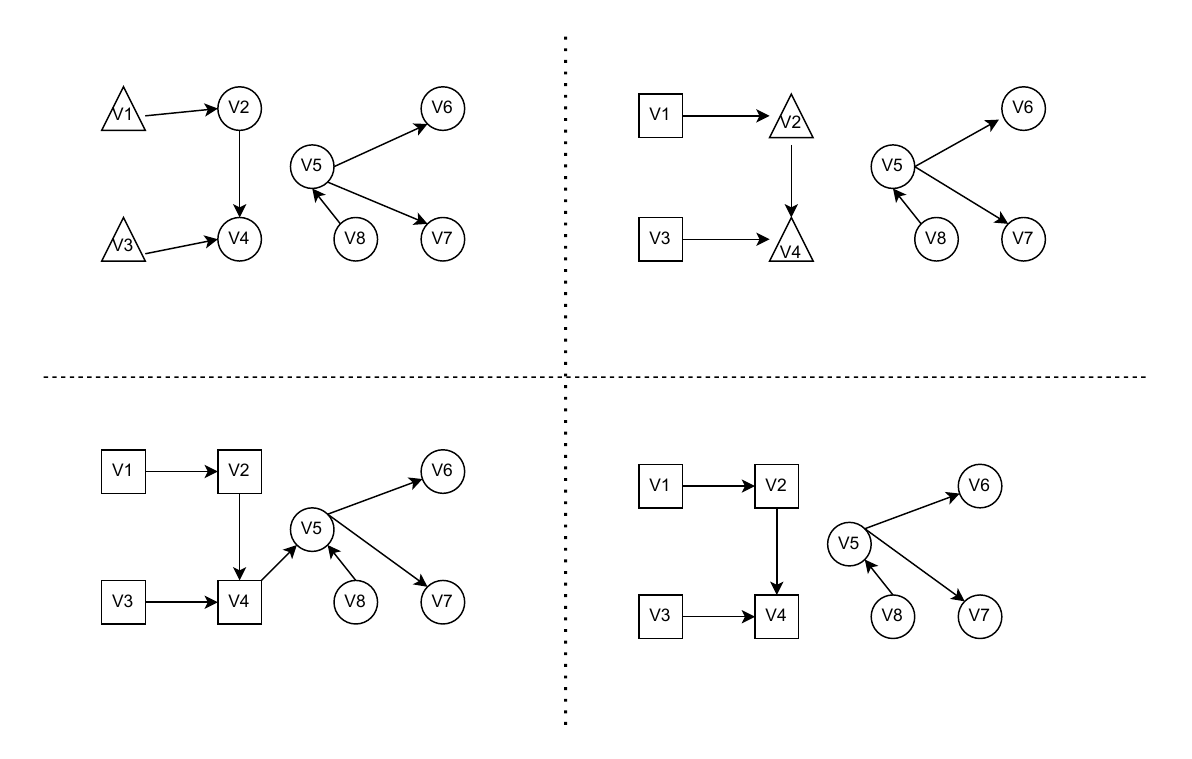}} 
    \subfigure[]{\includegraphics[width=0.45\textwidth]{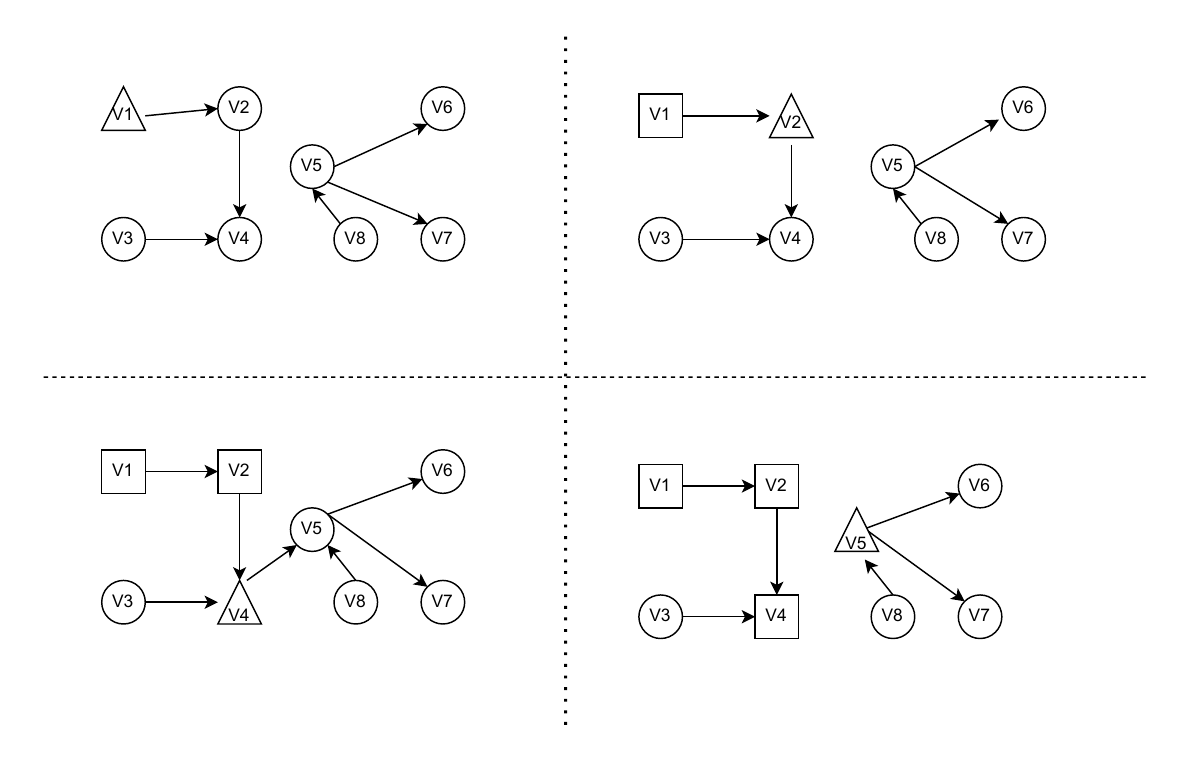}} 
    \caption{Counter example showing the violation of monotonicity under active-inactive transition (a): Seed set contains two infected nodes $V_1$ and $V_3$  (b): Seed set contains a single infection $V_1$}
    \label{f1}
\end{figure}

\begin{figure}[ht]
    \centering
    \subfigure[]{\includegraphics[width=0.45\textwidth]{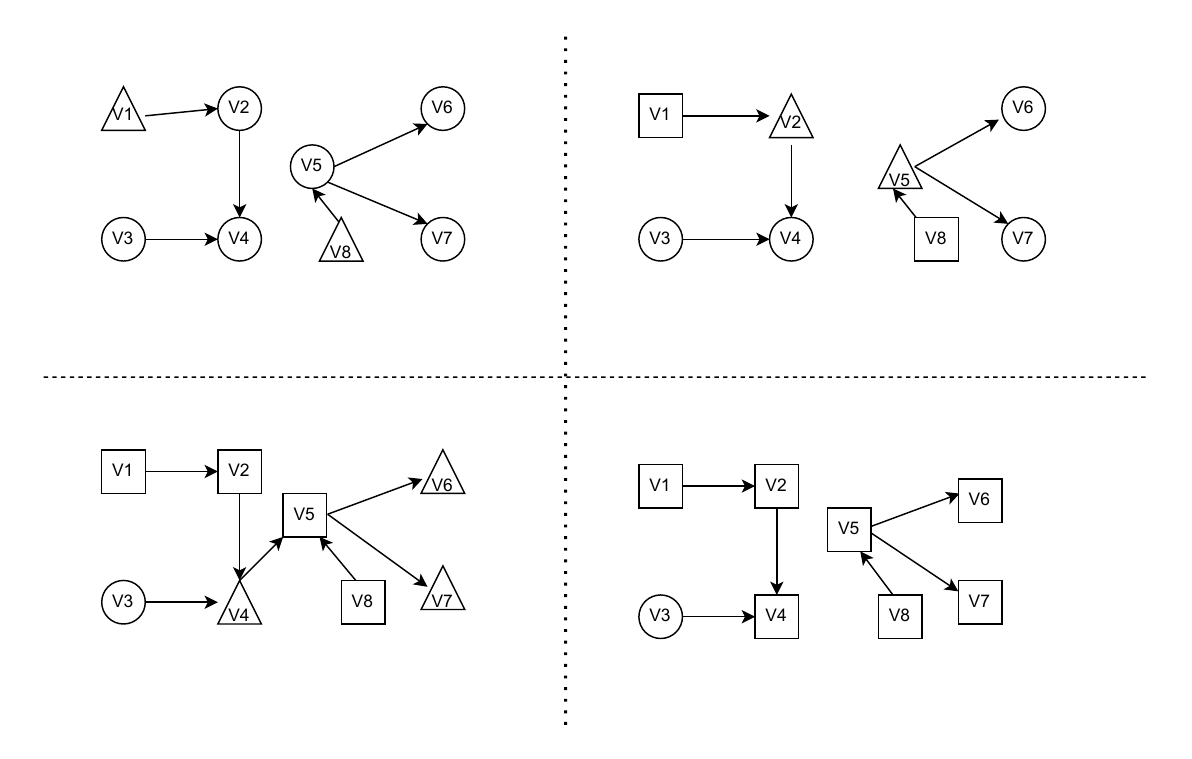}} 
    \subfigure[]{\includegraphics[width=0.45\textwidth]{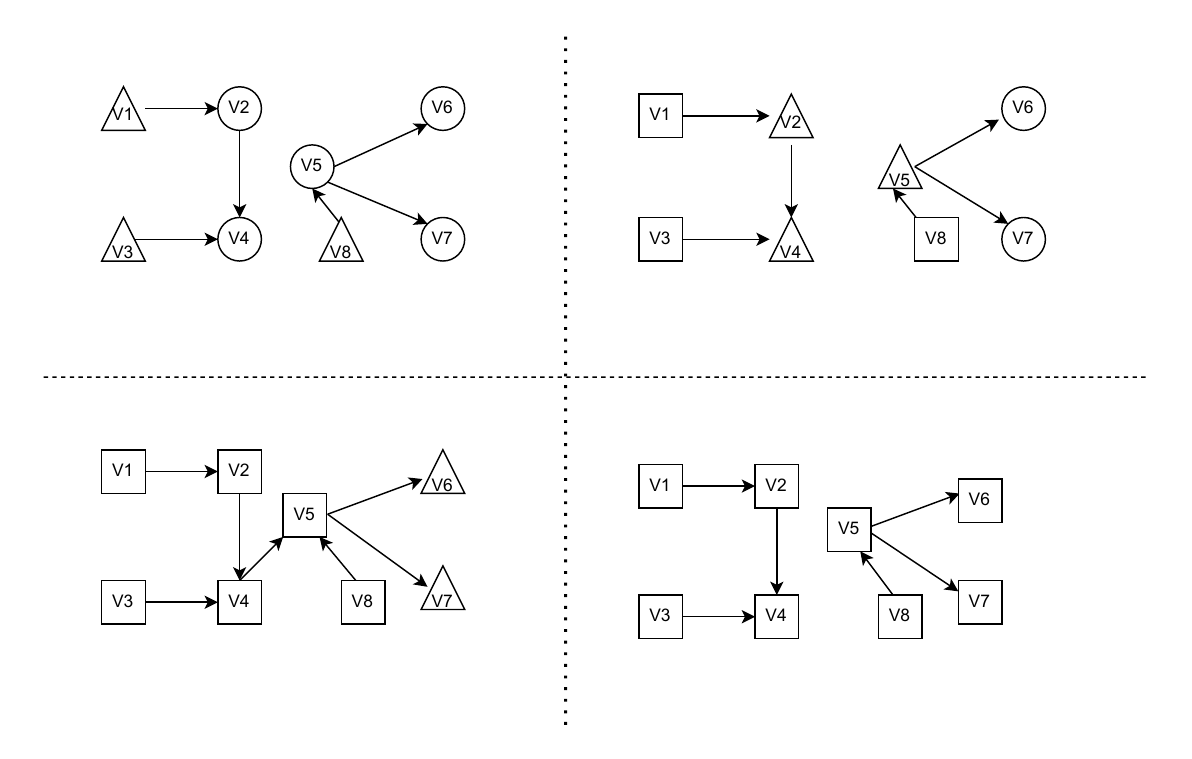}} 
    \caption{Counter example showing the violation of submodularity under active-inactive transition (a): Seed set contains one infected node $V_1$ and an additional node $V_8$ (b): Seed set contains $V_1 , V_3$ and an additional node $V_8$}
    \label{f2}
\end{figure}

\newpage
\section{Proposed Approach}
\label{proposed_approach}
Real-world diffusion processes are inherently temporal and influenced by behavioral factors such as reinforcement, relapse, and persistence. Traditional influence maximization (IM) methods on static or temporal networks often fail to capture these dynamics, limiting their applicability in real-life scenarios. To address these challenges, we propose a novel influence diffusion model—Continuous Persistent Susceptible-Infected with Reinforcement and Reactivation (cpSI-R) that operates over temporal networks while accounting for key temporal and behavioral characteristics of information spread.

The cpSI-R model enhances the classical Susceptible-Infected framework by introducing time-limited reinforcement and reactivation mechanisms, allowing previously influenced (infected) nodes to regain influence capability based on recurring exposure. This continuous-time model also ensures that influence is not transient but persistent over realistic time windows, enabling a more accurate reflection of real-world influence behavior.

To effectively leverage the cpSI-R model, we formulate an objective function that captures the expected cumulative influence over a time horizon, and prove that it satisfies monotonicity and submodularity, making it suitable for efficient approximation. We further design an optimized seed selection strategy that discretizes the temporal network intelligently, eliminates redundant computations, and incorporates lazy forward evaluation to significantly improve computational efficiency.

The rest of this section begins with an overview of information diffusion in temporal networks, followed by a formal description of the cpSI-R model. It then details the construction and properties of the objective function, including proofs of its monotonicity and submodularity. Subsequently, the strategy for influence maximization using the cpSI-R model is presented, along with techniques to reduce computational costs. Finally, a lazy forward influence strategy is introduced to accelerate seed selection while preserving theoretical guarantees.

\subsection{cpSI-R Model}

In traditional diffusion models, influence spreads in a simplistic, often static manner, failing to capture the dynamic and temporal nuances of real-world interactions. Temporal social networks, however, are characterized by active-inactive transitions, where nodes do not continuously exert influence. Instead, their influence can fluctuate based on interaction patterns, time, and repeated exposures. These temporal dynamics are crucial in scenarios like behavioral change campaigns or viral marketing, where repeated interactions increase the likelihood of adoption but may diminish over time if not reinforced. Existing models lack mechanisms to simulate such persistent yet time-limited influence, leading to suboptimal or unrealistic predictions in dynamic networks.

To address this gap, we introduce the Continuous Persistent Susceptible-Infected Model with Time-Limited Reinforcement and Re-activation (cpSI-R), specifically designed to reflect the persistence and decay of influence in temporal settings. This model describes a diffusion process where nodes (users) that become infected (adopt a behavior) continue to try to infect (influence) their neighbors within a designated time frame. If an infected node fails to spread the infection within this period, it loses its ability to influence others unless reactivated by interacting with another active node. The infection probability increases with each subsequent exposure, mirroring real-world situations where repeated interactions strengthen the likelihood of adoption.

A maximum time frame \( \tau \) is introduced within which an infected node can spread the infection. If an infected node does not spread the infection within \( \tau \) time units, it becomes inactive for spreading influence until reactivated by an active node. Let \( \delta_u \) denote the time of the last infection attempt by node \( u \). If \( t_{uv}^k - \delta_u > \tau \), then node \( u \) becomes inactive for spreading the infection until reactivated. This captures scenarios like a social media fitness campaign, where users may be encouraged to adopt healthy habits through continuous exposure to fitness-related posts. Each time a user sees a fitness post, their likelihood of adopting the habit increases, with influence fluctuating over time based on interaction patterns. If users disengage from fitness content for an extended period, they lose the drive to influence others unless re-engaged by active users.

When a node \( u \) becomes infected at time \( t \), it attempts to influence a susceptible node \( v \) based on a series of temporal exposures, represented by the sequence \( \{t_{uv}^k\}_{k \geq 1} \) where an edge between \( u \) and \( v \) exists after time \( t \). Each exposure increases the infection probability \( p_{uv}(t_{uv}^k) \):
\[
p_{uv}(t_{uv}^k) = p_0 \cdot (1 - e^{-\alpha k}) \cdot f(t_{uv}^k),
\]
where \( p_0 \) is the base infection probability, \( \alpha \) is a reinforcement factor, \( k \) is the exposure count, and \( f(t_{uv}^k) \) is a temporal function adjusting the probability based on \( t_{uv}^k \). At each \( t_{uv}^k \), if \( v \) is susceptible, \( u \) tries to infect \( v \) with probability \( p_{uv}(t_{uv}^k) \). If successful, \( v \) becomes infected, continuing the process from \( v \).

The time-dependent influence is modeled by:
\[
f(t_{uv}^k) = \beta e^{-\gamma (t_{uv}^k - t)},
\]
where \( \beta \) is a scaling factor and \( \gamma \) controls the decay rate of influence over time.

\begin{theorem}
For any instance of the cpSI-R model on a temporal graph with fixed probabilities and a temporal interaction function, the spread function \( \sigma_{cpSI\text{-}R} \) is monotone and submodular.
\end{theorem}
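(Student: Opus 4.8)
The plan is to follow the possible-world (live-edge) paradigm of Kempe, Kleinberg and Tardos, which reduces a probabilistic statement about the expected spread to a deterministic statement about reachable sets, and then to transfer monotonicity and submodularity from each realization to the expectation using the fact that these properties are preserved under non-negative combinations. The hypothesis that the probabilities are \emph{fixed} (depending only on the edge, the exposure index $k$, and the interaction time $t_{uv}^k$, but not on the seed set) is exactly what legitimizes fixing all of the randomness up front.

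First I would construct the realization. For every ordered temporal edge appearance $(u,v)$ at a time $t_{uv}^k$ I draw an independent uniform variable $U_{uv}^{k}\in[0,1]$ once and for all, and declare the $k$-th attempt of $u$ on $v$ successful in this world iff $U_{uv}^{k}\le p_{uv}(t_{uv}^k)$. Given a fixed seed set $S_0$, the diffusion is then deterministic: running the cpSI-R rules (SI-persistence, the $\tau$-window deactivation, reactivation upon contact with an active neighbor, and the reinforced/decayed success thresholds) yields a unique final infected set $R_\omega(S_0)$. Since $f_{\mathcal{M}}(S_0,\mathcal{G})=\mathbb{E}_\omega|R_\omega(S_0)|$ is an average of the per-world cardinalities, it suffices to prove that $S_0\mapsto|R_\omega(S_0)|$ is monotone and submodular in every world $\omega$.

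For monotonicity I would prove a \emph{coupling / no-blocking} lemma by induction on the time steps $t\in\mathcal{T}$: if $S\subseteq T$, then the set of infected nodes under $S$ is contained in that under $T$ at every time, and each node's infection time under $T$ is no later than under $S$. The delicate point is precisely the pathology of the counterexample in Lemma 1: extra seeds can trigger an \emph{earlier} infection which, under a purely irreversible active-inactive rule, could push a node through its $\tau$-window and permanently block a downstream path. The argument must therefore lean on the two features that distinguish cpSI-R from that model, namely that an infected node never reverts to susceptible and that a node deactivated for spreading can be \emph{reactivated} by a later contact with an active neighbor; together these should guarantee that any successful attempt realizable from $S$ remains realizable from $T$, so that no path is ever destroyed by adding seeds.

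I expect the main obstacle to lie in reconciling the reinforcement and decay factors with this coupling. Because $p_{uv}(t_{uv}^k)=p_0(1-e^{-\alpha k})f(t_{uv}^k)$ depends on the exposure count $k$ measured from $u$'s infection time and on the elapsed-time decay $f(t_{uv}^k)=\beta e^{-\gamma(t_{uv}^k-t)}$, advancing $u$'s infection time simultaneously raises the reinforcement factor $1-e^{-\alpha k}$ and lengthens the decay horizon, lowering $f$; the per-world coupling thus cannot be read off term-by-term and must be argued at the level of the threshold comparison $U_{uv}^k\le p_{uv}(t_{uv}^k)$ along coupled trajectories. Once the coupling is secured, submodularity would follow from the coverage structure of the reachable set, writing $R_\omega(S)$ as a union over seeds of the nodes they reach along live temporal paths and invoking the standard diminishing-returns argument for unions of reachable sets, with the no-blocking lemma supplying the set inclusions that make the coverage bound go through. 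The crux, and the place where the claim is most fragile, is verifying that reinforcement does not let a larger base set \emph{increase} the reach attributable to a newly added seed, since that is exactly what would break both the union decomposition and the diminishing-returns inequality.
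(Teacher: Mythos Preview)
Your plan coincides with the paper's: fix a realization $\omega$ by sampling all the edge-attempt randomness up front, argue that the deterministic per-world spread $\sigma_\omega(S)$ is monotone and submodular, and then carry both properties to the expectation $\sigma_{cpSI\text{-}R}(S)=\mathbb{E}_\omega[\sigma_\omega(S)]$ by linearity.

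The paper's execution is considerably more informal than yours. For monotonicity it asserts, per realization, that ``activations are never revoked and additional seeds only increase influence''; for submodularity it invokes the standard coverage intuition that nodes reachable by the new seed $u$ from the larger base $S'$ are already partly covered. There is no explicit coupling lemma, no induction over time steps, and no discussion of how an earlier infection time for $u$ interacts with the reinforcement factor $1-e^{-\alpha k}$ and the decay factor $f(t_{uv}^k)=\beta e^{-\gamma(t_{uv}^k-t)}$ inside the success probability. The concerns you articulate---that advancing $u$'s infection time shifts both the exposure count and the decay clock, so the threshold event $U_{uv}^k\le p_{uv}(t_{uv}^k)$ is not obviously preserved along coupled trajectories, and that reinforcement could in principle let a larger base set \emph{increase} the reach of a newly added seed---are precisely the points a rigorous proof would need to close; the paper treats them as self-evident rather than arguing them out. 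Your proposal is therefore not a different route but a more honest version of the same one, with the delicate steps named rather than skipped.
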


\begin{proof}
Let \( V \) denote the set of nodes in the temporal graph \( \mathcal{G} = \{G_t = (V_t, E_t)\}_{t \in \mathcal{T}} \), where \( \mathcal{T} = \{1, 2, \ldots, T\} \) represents the set of discrete time steps. The influence spread function \( \sigma_{cpSI\text{-}R} : 2^V \rightarrow \mathbb{R}_{\geq 0} \) maps any seed set \( S \subseteq V \) to the expected number of nodes that become active during the diffusion process governed by the cpSI-R model.

Let \( \Omega \) denote the sample space of all possible realizations \( \omega \) of the stochastic diffusion process. Each realization \( \omega \in \Omega \) corresponds to a deterministic unfolding of the cpSI-R process, based on fixed samples of random edge activations, temporal interaction weights, reinforcement effects, and reactivation events. Given a realization \( \omega \), the spread function becomes deterministic and is denoted \( \sigma_\omega(S) \), representing the number of nodes that are active at the end of the process starting from seed set \( S \).

The expected influence spread under cpSI-R is given by
\[
\sigma_{cpSI\text{-}R}(S) = \mathbb{E}_{\omega \sim \Omega}\left[ \sigma_\omega(S) \right] = \sum_{\omega \in \Omega} \Pr[\omega] \cdot \sigma_\omega(S).
\]

We now prove that \( \sigma_{cpSI\text{-}R} \) is both monotone and submodular.

Monotonicity: Let \( S \subseteq V \) be a seed set and \( u \in V \setminus S \) be a node not in \( S \). Consider the two sets \( S \) and \( S \cup \{u\} \).

Fix any realization \( \omega \in \Omega \). The value \( \sigma_\omega(S) \) represents the number of nodes activated when the diffusion process starts with \( S \), and \( \sigma_\omega(S \cup \{u\}) \) represents the number of nodes activated when starting from \( S \cup \{u\} \).

In the cpSI-R model, each active node can attempt to activate its neighbors within a limited time window, and reinforcement or reactivation can increase the chance of further activations. Adding node \( u \) to the initial seed set gives it an independent opportunity to activate nodes through its own influence path. Since activations are never revoked and additional seeds only increase influence, it follows that
\[
\sigma_\omega(S \cup \{u\}) \geq \sigma_\omega(S), \quad \text{for all } \omega \in \Omega.
\]

Taking expectation over all realizations, we obtain
\[
\sigma_{cpSI\text{-}R}(S \cup \{u\}) 
= \mathbb{E}_\omega\left[ \sigma_\omega(S \cup \{u\}) \right] 
= \sum_{\omega \in \Omega} \Pr[\omega] \cdot \sigma_\omega(S \cup \{u\}) 
\geq \sum_{\omega \in \Omega} \Pr[\omega] \cdot \sigma_\omega(S) 
= \sigma_{cpSI\text{-}R}(S).
\]

Therefore, \( \sigma_{cpSI\text{-}R} \) is monotone.

Submodularity: Let \( S \subseteq S' \subseteq V \), and let \( u \in V \setminus S' \). Fix a realization \( \omega \in \Omega \). Consider the marginal gain of adding \( u \) to \( S \) and to \( S' \). These are computed as
\[
\sigma_\omega(S \cup \{u\}) - \sigma_\omega(S)
\quad \text{and} \quad
\sigma_\omega(S' \cup \{u\}) - \sigma_\omega(S').
\]

Since \( S \subseteq S' \), the diffusion process starting from \( S' \) activates at least as many nodes as the process from \( S \), possibly more. Some of the nodes that \( u \) could reach when added to \( S \) might already be activated when \( u \) is added to the larger set \( S' \). Therefore, the marginal gain of adding \( u \) to \( S \) is greater than or equal to the gain of adding \( u \) to \( S' \), that is,
\[
\sigma_\omega(S \cup \{u\}) - \sigma_\omega(S) \geq \sigma_\omega(S' \cup \{u\}) - \sigma_\omega(S'), \quad \text{for all } \omega \in \Omega.
\]

Taking expectations on both sides,
\[
\sigma_{cpSI\text{-}R}(S \cup \{u\}) - \sigma_{cpSI\text{-}R}(S)
= \mathbb{E}_\omega\left[ \sigma_\omega(S \cup \{u\}) - \sigma_\omega(S) \right]
\geq \mathbb{E}_\omega\left[ \sigma_\omega(S' \cup \{u\}) - \sigma_\omega(S') \right]
= \sigma_{cpSI\text{-}R}(S' \cup \{u\}) - \sigma_{cpSI\text{-}R}(S').
\]

Therefore, \( \sigma_{cpSI\text{-}R} \) is submodular.

Since the function is both monotone and submodular, the proof is complete.
\end{proof}

In our cpSI-R  model, we extend the dynamics to incorporate time-limited interactions where nodes can transmit information only within specific time frames unless reactivated. The model captures diffusion dynamics with increasing infection probabilities due to repeated exposures, reflecting scenarios where repeated interactions or reinforcements increase the likelihood of adoption.Let's denote the node set of the network at time \( t \) as \( V(t) \) and the edge set as \( E(t) \), forming the dynamic graph \( G(t) = (V(t), E(t)) \).

Each edge \( (u,v) \) in \( E(t) \) has a time-limited transmission probability \( p_{uv}(t_1, t_2) \) over the interval \( (t_1, t_2) \), where \( t_1 \) and \( t_2 \) are times within which the edge \( (u,v) \) is active. This probability is computed based on the periods \( \delta t_m \) when the edge is active.

\subsection{Influence Maximization through cpSI-R}

In the cpSI-R model, accurately assessing node influence over time intervals is essential to understanding diffusion dynamics. We introduce a method that discretizes time intervals based on both global and local structural similarities. This approach ensures that the selected intervals align with significant structural changes in the network, thereby enhancing the fidelity of influence estimation. 

Our contribution lies in refining the dynamic update equations within the cpSI-R framework and reducing computational costs by avoiding redundant influence probability computations for seed nodes across multiple iterations. These equations incorporate time-dependent transmission probabilities, derived through a sampling algorithm that adapts to both global and local structural evolution. This adaptation not only improves computational efficiency but also enhances the model's accuracy in representing real-world diffusion processes where network structures evolve dynamically. Furthermore, by ensuring that the sampling captures key moments of topological change, the model becomes well-suited for applications such as epidemic control, targeted marketing, and information dissemination in rapidly changing environments. This nuanced understanding of dynamic influence provides decision-makers with precise tools for optimizing interventions in temporal settings.

\begin{algorithm}
\caption{TemporalInfluenceMaximization}
\begin{algorithmic}[1]
\Require Transmission Matrix $F(\cdot)$, Time Horizon $(t_0, t_0 + h)$, Seed Set Size $k$, Temporal Network $T$, Maximum Time Stamp $\text{max\_t}$, Threshold $\eta$, Weighting Factors $\alpha, \beta$, Minimum Iterations $\text{MinIter}$
\Ensure Set $S$ of $k$ influential nodes

\State $t_{\text{set}} \gets \text{SamplingAlgorithm}(T, \text{max\_t}, \eta, \alpha, \beta)$ \Comment{Step 1: Temporal Sampling}
\For{each node $i \in N(t_0)$}
    \State $\text{CalcInfluence}(\{i\}, F(\cdot), (t_0, t_0 + h))$ \Comment{Step 2: Initialize Influence Calculation}
\EndFor

\State $S \gets \text{LazyForwardInfluence}(F(\cdot), (t_0, t_0 + h), k, \text{MinIter})$ \Comment{Step 3: Lazy Forward Influence Maximization}

\State $\text{FinalInfluence} \gets \text{CalcInfluence}(S, F(\cdot), (t_0, t_0 + h))$ \Comment{Step 4: Final Influence Calculation for Selected Set $S$}

\State \Return $S$
\end{algorithmic}
\end{algorithm}

\begin{algorithm}
\caption{CalcInfluence}
\begin{algorithmic}[1]
\Require Seed Set $S$, Transmission Matrix $F(\cdot)$, Time Stamp $(t_0, t_0 + h)$
\Ensure Influence of seed set $S$
\State Initialize $p(i, t_0) = 1$ for each $i \in S$ and $0$ otherwise
\State Divide $(t_0, t_0 + h)$ into time periods $t_0, \ldots, t_r = t_0 + h$ using \textit{SamplingAlgorithm}
\State $j \gets 0$
\Repeat
    \State Compute edge set $E(t_j, t_{j+1})$ and spread matrix $P(t_j, t_{j+1})$
    \For{each $i$}
        \State Compute $p(i, t_{j+1})$ using dynamic update equations for cpSI-R model
    \EndFor
    \State $j \gets j + 1$
\Until{$j = r$}
\State \Return $|S| + \sum_{i \in N(t) / S(t)} p(i, t_r)$
\end{algorithmic}
\end{algorithm}

\texttt{Algorithm 2} initializes node influence probabilities based on the given seed set and discretizes the interval $(t_0, t_0 + h)$ using the Sampling Algorithm that dynamically adjusts to structural changes in the network. It iteratively computes transmission probabilities and updates influence estimates using the modified dynamic equations designed for the cpSI-R model. By focusing on only the actively influencing nodes and using an efficient sampling approach, the algorithm minimizes computational overhead while ensuring accurate influence estimation.The computational cost is further reduced through effective temporal sampling. Oversampling increases redundancy, while undersampling may degrade the quality of selected seeds. We address this by providing an effective method to sample the temporal network. The goal of the sampling algorithm is to partition the overall time window into smaller intervals that meaningfully capture changes in network structure. This step is crucial to understanding how the network evolves over time. Our sampling algorithm uses two complementary similarity measures between consecutive graph snapshots: Jaccard similarity \( L(G_1, G_2) \) and Kulczynski similarity \( K(G_1, G_2) \).

We select these two measures to capture both local and global structural changes. Jaccard similarity measures overlap in edge sets, effectively tracking local, immediate changes in connections. Kulczynski similarity provides a more global perspective by accounting for the proportion of shared edges with respect to the total edges in each snapshot. Together, these measures provide a balanced assessment of structural evolution in the temporal network. The combined score computed from these similarities helps to identify timestamps where significant structural changes occur—key moments for accurate influence estimation in the cpSI-R model.

\begin{definition}[Temporal Jaccard Similarity]
Let \( \mathcal{G} = \{G_t = (V_t, E_t)\}_{t=1}^{T} \) be a temporal network represented as a sequence of graph snapshots over discrete time steps. The Jaccard similarity between two snapshots \( G_{t_1} \) and \( G_{t_2} \), with \( t_1 < t_2 \), is defined as:
\begin{equation*}
L(G_{t_1}, G_{t_2}) = \frac{|E_{t_1} \cap E_{t_2}|}{|E_{t_1} \cup E_{t_2}|},
\end{equation*}
where \( E_t \) denotes the set of active edges at time step \( t \). This measure captures the structural overlap of interactions between two time points in the temporal network.
\end{definition}

\begin{definition}[Temporal Kulczynski Similarity]
Given two snapshots \( G_{t_1} = (V_{t_1}, E_{t_1}) \) and \( G_{t_2} = (V_{t_2}, E_{t_2}) \) from a temporal network \( \mathcal{G} = \{G_t\}_{t=1}^T \), the Kulczynski similarity between them is defined as:
\begin{equation*}
K(G_{t_1}, G_{t_2}) = \frac{1}{2} \left( 
\frac{|E_{t_1} \cap E_{t_2}|}{|E_{t_1}|} + 
\frac{|E_{t_1} \cap E_{t_2}|}{|E_{t_2}|} 
\right),
\end{equation*}
where \( E_{t_1} \) and \( E_{t_2} \) represent the sets of active edges at times \( t_1 \) and \( t_2 \), respectively. This metric quantifies the temporal consistency of edge activity between two time points.
\end{definition}

The cumulative score \( Score \) combines both similarity measures to determine the significance of a timestamp \( t \) for partitioning:
\begin{equation*}
Score = \alpha \cdot L(G_1, G_2) + \beta \cdot K(G_1, G_2)
\end{equation*}
where \( \alpha \) and \( \beta \) are weighting factors controlling the relative importance of Jaccard and Kulczynski similarities.

While efficient influence estimation is important, it is equally critical to determine the specific time intervals where the network undergoes meaningful structural change. These intervals help improve estimation accuracy while keeping computational cost low. The Sampling Algorithm plays a vital role in achieving this balance.

\subsubsection{Adaptive Sampling Based on Structural Dynamics}

To reduce computational overhead while preserving the temporal fidelity of diffusion processes, we propose an adaptive sampling strategy that selects structurally informative time stamps. The key idea is to identify points in time where the network undergoes meaningful structural change, thereby avoiding redundant or low-variance snapshots.

At the core of our method lies a similarity scoring mechanism that assesses the structural evolution between consecutive graph snapshots. Specifically, for each pair of successive snapshots \( G_{t} \) and \( G_{t+1} \), we compute a composite similarity score, denoted as \( \textit{Score} \), which reflects the structural overlap between the two graphs. This score is computed using a convex combination of two widely-used structural similarity metrics: Jaccard similarity and Kulczynski similarity:
\[
\textit{Score}(G_t, G_{t+1}) = \alpha \cdot L(G_t, G_{t+1}) + \beta \cdot K(G_t, G_{t+1}),
\]
where \( L(\cdot, \cdot) \) and \( K(\cdot, \cdot) \) denote the Jaccard and Kulczynski similarities, respectively, and \( \alpha, \beta \in [0,1] \) are their corresponding weights such that \( \alpha + \beta = 1 \). 

The values of \( \alpha \) and \( \beta \) are set based on empirical validation to balance global and local structural perspectives: Jaccard emphasizes global edge set overlap, while Kulczynski adjusts for asymmetric distributions of edges between snapshots. In our experiments, we set \( \alpha = 0.5 \) and \( \beta = 0.5 \), providing equal importance to both similarity aspects, although these can be tuned for specific application contexts.

If the computed \( \textit{Score} \) exceeds a predefined threshold \( \eta \), the corresponding time stamp \( t \) is added to the sampled time set \( t\_set \). Otherwise, the algorithm adaptively increments \( t \) using an exponential step size until a sufficiently distinct structural shift is observed. This dynamic adjustment ensures that only non-redundant, structurally informative snapshots are selected—balancing computational efficiency with the need to preserve critical temporal dynamics. This sampling strategy is designed to complement the \textit{Calcinfluence()} routine, enabling accurate estimation of influence spread by focusing on structurally relevant intervals. By identifying time windows where significant structural transitions occur, this approach facilitates robust analysis of temporal networks and improves the quality of seed influence estimation across time.

Although the sampling algorithm effectively isolates key intervals, identifying the most influential nodes within those intervals is crucial for understanding how influence propagates. The next section addresses this through the introduction of the Lazy Forward Influence algorithm, which performs efficient seed selection within the sampled temporal framework.

\begin{algorithm}[H]
\caption{SamplingAlgorithm}
\begin{algorithmic}[1]
\Require Temporal Network $T$, Maximum Time Stamp $\text{max\_t}$, Threshold $\eta$, Weighting Factors $\alpha, \beta$
\Ensure Set of selected time stamps $t_{\text{set}}$
\State Initialize empty set $t_{\text{set}}$
\State $t \gets 0$
\While{$t < \text{max\_t}$}
    \State $G_1 \gets \text{ObtainSnapshot}(T, t)$
    \State $G_2 \gets \text{ObtainSnapshot}(T, t + 1)$
    \State Compute $L \gets L(G_1, G_2)$ \Comment{Jaccard similarity}
    \State Compute $K \gets K(G_1, G_2)$ \Comment{Kulczynski similarity}
    \State Compute $Score \gets \alpha \cdot L + \beta \cdot K$
    \If{$Score \geq \eta$}
        \State Add $t$ to $t_{\text{set}}$
        \State $t \gets t + 1$
    \Else
        \State $step \gets 1$
        \While{$Score < \eta$ \textbf{and} $t + step \leq \text{max\_t}$}
            \State $t \gets t + step$
            \State $G_2 \gets \text{ObtainSnapshot}(T, t)$
            \State Compute $L \gets L(G_1, G_2)$
            \State Compute $K \gets K(G_1, G_2)$
            \State Compute $Score \gets \alpha \cdot L + \beta \cdot K$
            \State $step \gets 2 \cdot step$
        \EndWhile
        \If{$Score \geq \eta$}
            \State Add $t$ to $t_{\text{set}}$
            \State $t \gets t + 1$
        \Else
            \State $t \gets t + step$
        \EndIf
    \EndIf
\EndWhile
\State \Return $t_{\text{set}}$
\end{algorithmic}
\end{algorithm}

\subsubsection{Lazy Forward Influence}
We introduce the \textit{LazyForwardInfluence} algorithm, which leverages lazy forward optimization \cite{celf} to identify influential nodes in a temporal network. This method eliminates the exhaustive step of evaluating each node in \(N(t_0) - S\) for potential replacement with any node in \(S\). Instead, it incorporates a mechanism to track iterations without replacements and terminates early if no replacements occur for a specified number of iterations. This approach significantly reduces computational overhead while preserving seed quality, ensuring an efficient and effective analysis of temporal network dynamics.

The \textit{LazyForwardInfluence} algorithm  efficiently identify a set of $k$ influential nodes in temporal networks over a specified time horizon. It leverages precomputed influence estimates to construct an initial seed set and employs a priority-based replacement mechanism to iteratively refine this set. At each iteration, it considers the node with the highest potential influence gain from the candidate pool and evaluates its impact when substituted for each node in the current seed set. A replacement is executed only if a substantial gain in overall influence is observed.

The algorithm incorporates a lazy evaluation scheme by maintaining a priority queue of candidates and halting the process early when no beneficial replacements are found over a fixed number of iterations. This approach significantly reduces redundant computations while preserving the influence quality of the selected set. Overall, \textit{LazyForwardInfluence} balances optimization accuracy and computational efficiency, making it suitable for influence maximization in large-scale or evolving temporal networks.

\begin{algorithm}[H] 
\caption{LazyForwardInfluence}
\begin{algorithmic}[1]
\Require Transmission Matrix $F(\cdot)$, Time Horizon $(t_0, t_0 + h)$, Number of Influence Points $k$, Minimum Iterations $\text{MinIter}$
\Ensure Set $S$ of $k$ influential nodes
\For{each node $i \in N(t_0)$}
    \State Compute $\text{CalcInfluence}(\{i\}, F(\cdot), (t_0, t_0 + h))$
\EndFor
\State $S \gets$ initial set of $k$ nodes with the highest $\text{CalcInfluence}$ values
\State $\text{PriorityQueue} \gets$ initialize max-heap with nodes in $N(t_0) - S$ ordered by $\text{CalcInfluence}$ value
\State $\text{NoReplace} \gets 0$
\While{$\text{NoReplace} < \text{MinIter}$ and $\text{PriorityQueue.isEmpty()} = \text{false}$}
    \State $i \gets \text{PriorityQueue.extractMax()}$
    \State $\text{maxIncrease} \gets 0$
    \State $\text{replaceNode} \gets \text{null}$
    \For{each node $j \in S$}
        \State $\text{influenceGain} \gets \text{CalcInfluence}(S \cup \{i\} - \{j\}, F(\cdot), (t_0, t_0 + h)) - \text{CalcInfluence}(S, F(\cdot), (t_0, t_0 + h))$
        \If{$\text{influenceGain} > \text{maxIncrease}$}
            \State $\text{maxIncrease} \gets \text{influenceGain}$
            \State $\text{replaceNode} \gets j$
        \EndIf
    \EndFor
    \If{$\text{replaceNode} \neq \text{null}$ and $\text{maxIncrease} > 0$}
        \State $S \gets S \cup \{i\} - \{\text{replaceNode}\}$
        \State $\text{NoReplace} \gets 0$
    \Else
        \State $\text{NoReplace} \gets \text{NoReplace} + 1$
    \EndIf
\EndWhile
\State \Return $S$
\end{algorithmic}
\end{algorithm}

With the mechanisms for identifying key time intervals and influential nodes in place, it is important to illustrate the practical utility of these algorithms. A toy example on a temporal network will demonstrate the effectiveness of the cpSI-R model and the associated algorithms in capturing influence dynamics over time.

\textbf{Example:} To illustrate the working of the cpSI-R model and the algorithms, we assume a toy temporal network as shown in Figure 3.

We begin by allowing the network to evolve in one direction, where edges are added over time, while deletions are restricted for simplicity. This ensures that we study the diffusion process in a setting where network complexity gradually increases. A key component of our approach is the sampling algorithm, which partitions time into intervals based on structural similarity between consecutive snapshots. We chose \( \eta = 0.3 \), which leads to the selection of snapshots 1, 2, 4, and 8 based on their similarity scores, drastically reducing computational time. For example, if selecting all snapshots required \( O(T) \) computations where \( T \) is the total number of snapshots, selecting fewer snapshots based on \( \eta \) reduces this to \( O(\eta T) \), offering computational efficiency, particularly when \( \eta \) is on the higher side. The similarity score is calculated as:

\[
\text{Score}(G_1, G_2) = \alpha \cdot L(G_1, G_2) + \beta \cdot K(G_1, G_2)
\]

where \( L(G_1, G_2) \) is the Jaccard similarity defined as:

\[
L(G_1, G_2) = \frac{|E(G_1) \cap E(G_2)|}{|E(G_1) \cup E(G_2)|}
\]

and \( K(G_1, G_2) \) is the Kulczynski similarity:

\[
K(G_1, G_2) = \frac{1}{2} \left( \frac{|E(G_1) \cap E(G_2)|}{|E(G_1)|} + \frac{|E(G_1) \cap E(G_2)|}{|E(G_2)|} \right)
\]

In our case, \( \alpha = 0.5 \) and \( \beta = 0.5 \), giving equal weight to both similarities. For example, in our toy temporal network between two snapshots at \( t=1 \) and \( t=2 \) represented as \( G_1^1 \) and \( G_2^2 \) respectively, the number of common edges \( |E(G_1^1) \cap E(G_2^2)| = 1 \), \( |E(G_1^1)| = 1 \), and \( |E(G_2^2)| = 4 \), the Jaccard similarity would be:

\[
L(G_1, G_2) = \frac{4}{1 + 4 - 1} = \frac{1}{4} = 0.25
\]

and the Kulczynski similarity would be:

\[
K(G_1, G_2) = \frac{1}{2} \left( \frac{1}{1} + \frac{1}{4} \right) = \frac{1}{2} \left( 1 + 0.25 \right) = 0.6
\]

Thus, the overall similarity score is:

\[
\text{Score}(G_1, G_2) = 0.5 \cdot 0.25 + 0.5 \cdot 0.6 = 0.4
\]

Since the score exceeds \( \eta = 0.3 \), the snapshot 2 is selected. Similarly snapshots 4, and 8 are selected and hence we significantly reduce computation time by focusing on only few snapshots instead of all.

Next, we calculate the influence of each node. Initially, for each node \( i \), the influence probability \( p(i, t_0) \) is set to 1 for seed nodes and 0 otherwise. For each time interval \( [t_j, t_{j+1}] \), we compute the edge set \( E(t_j, t_{j+1}) \) and update the influence probabilities using the cpSI-R dynamic equations. For instance, if node 2 is a seed, its influence at time \( t_j \) is updated based on the spread matrix \( P(t_j, t_{j+1}) \), reflecting the probabilities of infecting its neighbors. After calculating individual node influences, we apply the Lazy Forward Influence algorithm to select the top \( k \) seed nodes, which maximizes the marginal gain in influence. In our toy network, we set \( k = 2 \), and after running the algorithm, nodes 2 and 3 emerge as the top seeds based on their cumulative influence.

The Lazy Forward Influence algorithm starts by calculating the influence of all nodes at the initial time point \( t_0 \) and constructs a priority queue of nodes based on their influence values. As the algorithm progresses, we iteratively update the seed set \( S \) to include nodes that offer the highest marginal gain. The influence of the final seed set \( S \) is computed as:

\[
\text{Final Influence}(S) = |S| + \sum_{i \in N(t) \setminus S(t)} p(i, t_r)
\]

where \( p(i, t_r) \) represents the influence probability of non-seed nodes at the final time stamp \( t_r \).

This approach not only reduces computational overhead but ensures accurate selection of influential nodes while capturing meaningful structural changes in the temporal network, resulting in nodes 2 and 3 being identified as the most influential. This is further demonstrated in the results section, where the optimal seed sets and influence spread are compared across different values of \( \eta \) and other parameters.

\begin{figure}[ht]
    \centering
    \subfigure[]{\includegraphics[width=0.46\textwidth]{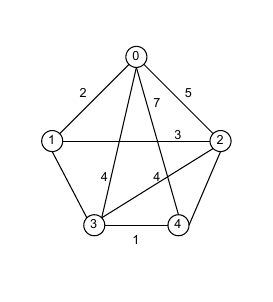}} 
    \subfigure[]{\includegraphics[width=0.52\textwidth]{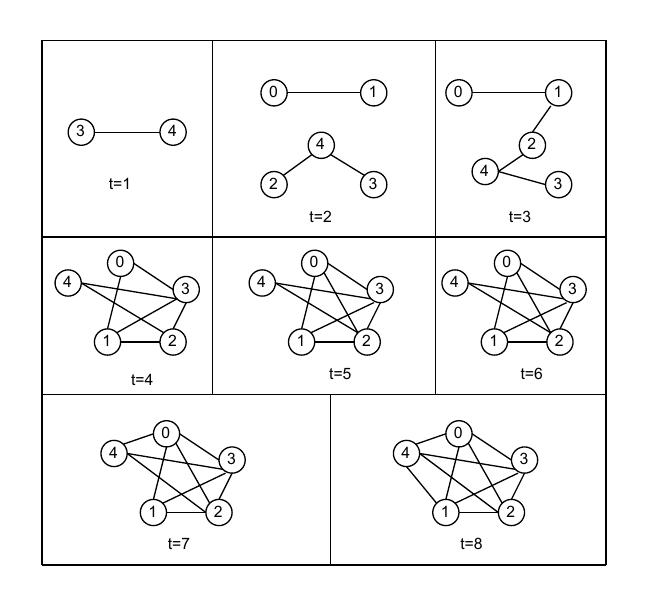}} 
    \caption{Example illustrating the working of cpSI-R model (a): 
    Toy temporal network represented with initial contact sequences  (b): Snapshots showing temporal evolution of a toy temporal network}
    \label{f4}
\end{figure}

\subsubsection{Computational Complexity}
The computational complexity of the \texttt{cpSI-R} model fundamentally depends on two components: the number of time intervals $T$ over the temporal network and the effort required to compute influence over these intervals. In a naive approach, one would simulate influence propagation across all $T$ snapshots. For each snapshot, the propagation of influence involves iterating over all temporal edges and updating influence probabilities, leading to a complexity of $\mathcal{O}(T \cdot |E|)$ for a complete forward simulation. Furthermore, for influence maximization using a greedy strategy over a seed set of size $k$, the marginal gain computation must be repeated for multiple candidate nodes, resulting in an added factor of $\mathcal{O}(k \log |V|)$ due to lazy evaluation. Thus, the overall worst-case time complexity in the absence of any optimization is:

\[
\mathcal{O}(T \cdot |E| + k \log |V| \cdot T)
\]

However, this full-scan approach is computationally prohibitive, especially for long temporal sequences. To mitigate this, the \texttt{cpSI-R} model incorporates a structural similarity-driven sampling algorithm that adaptively segments the temporal network into $r \ll T$ non-uniform intervals. These intervals are selected by identifying structural change points between consecutive snapshots using a combined similarity score derived from the Jaccard index:
\[
J(G_t, G_{t+\delta}) = \frac{|E_t \cap E_{t+\delta}|}{|E_t \cup E_{t+\delta}|}
\]
and the Kulczynski measure:
\[
K(G_t, G_{t+\delta}) = \frac{1}{2} \left( \frac{|E_t \cap E_{t+\delta}|}{|E_t|} + \frac{|E_t \cap E_{t+\delta}|}{|E_{t+\delta}|} \right)
\]
Snapshots are retained only when the similarity score falls below a pre-defined threshold $\theta$, indicating significant structural deviation. This selective sampling dramatically reduces the number of intervals to be processed, yielding a reduced set of $r$ time intervals that preserve key structural dynamics.

The influence estimation subroutine $\texttt{CalcInfluence}(S, r)$ then operates over only $r$ sampled intervals. Each forward simulation involves updating influence probabilities $\pi_v^{[t_i, t_{i+1}]}$ for each node $v$ using cpSI-R's dynamic equations within the interval $[t_i, t_{i+1}]$. For all temporal edges $|E|$ within these intervals, the influence propagation step runs in $\mathcal{O}(|E|)$ time per interval. Meanwhile, the influence maximization process applies a lazy greedy algorithm, exploiting submodularity to perform at most $\mathcal{O}(k \log |V|)$ marginal gain evaluations.

Combining these optimizations, the total time complexity of the model is significantly reduced to:
\[
\mathcal{O}(r \cdot |E| + k \log |V| \cdot r)
\]
This formulation achieves a favorable balance between computational efficiency and influence spread quality. By reducing the number of required influence propagation steps through intelligent sampling, the model scales effectively to large temporal networks while preserving theoretical guarantees of monotonicity and submodularity.

\section{Experiments}
\label{experiments}
To evaluate the effectiveness of the proposed approach, we conducted experiments on two real-world temporal network datasets. All experiments were executed on a standard computing environment with a multi-core processor and 16 GB of RAM. The implementation was carried out in Python 3.11 using the PyCharm development environment.

\subsection{Datasets}
We used the following data sets in order to test the approach.

\textbf{Primary school temporal network data \cite{gemmetto2014mitigation}\cite{stehle2011high}:} The temporal interactions between students and teachers are captured in this dataset. Many lines with the same timestamp indicate multiple encounters in the same span. Seconds are used to measure time. There are 125,775 edges and 242 nodes in the dataset.

\textbf{Contact patterns in a village in rural Malawi \cite{ruralmalawids}:} This dataset contains the contact patterns in a village in rural Malawi, based on proximity sensors technology. 

\textbf{CollegeMsg temporal network \cite{panzarasa2009patterns}:}This dataset is comprised of private messages sent on an online social network at the University of California, Irvine. There are 1899 nodes and 59835 temporal edges in the dataset.

\textbf{wiki-talk temporal network \cite{paranjape2017motifs}\cite{leskovec2010governance}:}This is a temporal network representing Wikipedia users editing each other's Talk page. A directed edge (u, v, t) means that user u edited user v's talk page at time t. The dataset consists of 1140149 nodes and 7833140 temporal edges.

\textbf{Stack Overflow temporal network \cite{paranjape2017motifs}:}This is a temporal network of interactions on the stack exchange web site Stack Overflow. The dataset consists of 2601977 nodes and 63497050 temporal edges

\textbf{Synthetic temporal network:} To simulate a scalable and evolving temporal interaction network, we generated a synthetic temporal network in Python using NetworkX and NumPy. The network comprises 3,000,000 nodes and over 70,000,000 temporal edges. The edge generation follows a modified preferential attachment mechanism, where newly added nodes are more likely to connect with high-degree existing nodes, emulating popularity-driven behaviour.

\begin{table}[htbp]
\centering
\caption{Summary of Temporal Network Datasets}
\begin{tabular}{|p{4cm}|p{2.5cm}|p{2.5cm}|p{5cm}|}
\hline
\textbf{Dataset} & \textbf{Number of Nodes} & \textbf{Number of Temporal Edges} & \textbf{Description} \\
\hline
Primary School Temporal Network \cite{gemmetto2014mitigation, stehle2011high} & 242 & 125,775 & Captures face-to-face interactions between students and teachers using temporal proximity data. \\
\hline
Rural Malawi Contact Network \cite{ruralmalawids} & 142 & 9,436 & Proximity-based contact patterns in a village in rural Malawi. Node and edge counts not specified. \\
\hline
CollegeMsg Network \cite{panzarasa2009patterns} & 1,899 & 59,835 & Temporal data of private messages exchanged on a university online social network. \\
\hline
Wiki-Talk Network \cite{paranjape2017motifs, leskovec2010governance} & 1,140,149 & 7,833,140 & Wikipedia users editing each other’s Talk pages, represented as directed temporal interactions. \\
\hline
Stack Overflow Network \cite{paranjape2017motifs} & 2,601,977 & 63,497,050 & Temporal network of user interactions on the Stack Overflow platform. \\
\hline
Synthetic Temporal Network & 3,000,000 & 70,000,000 &Temporal Barabási–Albert model \\
\hline

\end{tabular}
\label{tab:datasets}
\end{table}

\subsection{Implementation and Baselines}
We calculate the seed set using \textit{CalcInfluence()}, then utilize \textit{SamplingAlgorithm()} to determine the sampled time stamps, and finally apply \textit{TemporalInfluenceMaximization()} to calculate the influence of the seed set. Our approach relies on the sampling of temporal data to enhance accuracy in capturing influence maximization. We measured the influence  across different sampling rates and determined the optimal values in both datasets. This choice balances the trade-off between computational feasibility and the accuracy of influence estimation. We chose the sampling parameter \(\eta\) values of 0.7 and 0.6 across all datasets because empirical validation showed these values effectively capture significant structural changes in temporal networks without oversampling or undersampling. Additionally, we select \(\alpha = 0.5\) and \(\beta = 0.5\) to give equal importance to both local and global similarity.

We compared the performance of our approach with the following baselines:
\begin{itemize}
    \item \textbf{Dynamic Degree Discount \cite{murata2018extended}:}  The algorithm selects influential seed nodes in a dynamic network over a time period using a susceptibility parameter $\alpha$. The seed set is initially empty. Each node $j$ is assigned an initial degree discount value $\Delta_j = \mathcal{D}_\mathcal{T}(j)$ and a time counter $\tau_j = 0$. During each iteration, the node with the highest $\Delta_j$ is added to the seed set. For each neighbor $q$ of this node, the degree discount and time counter are updated as $\tau_q = \tau_q + 1$ and $\Delta_q = \mathcal{D}_\mathcal{T}(q) - 2 \tau_q - (\mathcal{D}_\mathcal{T}(q) - \tau_q) \tau_q \alpha$. The algorithm continues until the seed set reaches the desired size. We have chosen the value of $\alpha = 0.01$ in our experiments.
\item \textbf{Borgs and Tangs Algorithm \cite{borgs2014maximizing} \cite{tang2014influence}:}  This method by Borgs and Tang aims to identify influential seed nodes in a fixed network \( \mathcal{N} \). The algorithm starts by selecting a node \( w \) at random and removing edges from \( \mathcal{N} \) with probability \( 1 - \lambda \), where \( \lambda \) is the susceptibility parameter. Afterward, it gathers nodes that are still reachable from \( w \), forming a set of reachable nodes \( \mathcal{R} \). This process is repeated \( \gamma \) times. The node with the highest frequency in \( \mathcal{R} \) is added to the seed set \( \mathcal{S} \). The procedure continues until the size of \( \mathcal{S} \) reaches \( k \).
In our experiments, we have chosen the value of susceptibility as $0.01$
\item \textbf{Dynamic CI \cite{murata2018extended}:}  Dynamic CI expands Morone's method to dynamic networks by redefining node degrees and distances over time. The Dynamic CI index, \( D_{\Theta_m}(x) \), is calculated as:

\[
D_{\Theta_m}(x) = \Gamma(x) \sum_{y \in \Delta(x,m)} \Gamma(y)
\]

where \( \Gamma(x) \) is the dynamic degree of node \(x\), and \( \Delta(x,m) \) represents the set of nodes reachable from \(x\) within a minimum time of \(m\). This index helps identify the top \(k\) influential nodes.Since it is difficult to perform experiments on taking different values of $m$, we have used $m=10$ in our experiments.

\item \textbf{Forward Influence Algorithm \cite{charuagrawal}:} The forward influence algorithm identifies influential nodes by first computing $\text{DynInfuenceVal}(\{i\}, F(\cdot), (t_0, t_0 + h))$ for each node $i \in N(t_0)$ and selecting the top $k$ nodes. It then iteratively replaces nodes in this set with others to maximize the increase in influence, evaluating changes as $\text{DynInfuenceVal}(S \cup \{i\} \setminus \{j\}, F(\cdot), (t_0, t_0 + h)) - \text{DynInfuenceVal}(S, F(\cdot), (t_0, t_0 + h))$. The process continues until no significant improvement is found or a maximum number of iterations is reached.

\item \textbf{INMFA \cite{Erkol_2020}:}  INMFA models the temporal SIR dynamics using marginal probabilities $S_i^{(t)}, I_i^{(t)}, R_i^{(t)}$, with the constraint $S_i^{(t)} + I_i^{(t)} + R_i^{(t)} = 1$. The evolution is given by:
    \[
    I_i^{(t)} = (1 - \mu) I_i^{(t-1)} + \left(1 - I_i^{(t-1)} - R_i^{(t-1)}\right) \left[1 - \prod_j \left(1 - \lambda A_{ji}^{(t-1)} I_j^{(t-1)}\right)\right],
    \]
    \[
    R_i^{(t)} = R_i^{(t-1)} + \mu I_i^{(t-1)},
    \]
    where $\lambda$ is the infection rate, $\mu$ is the recovery rate, and $A_{ji}^{(t)}$ is the temporal adjacency matrix. INMFA neglects inter-node correlations, leading to a systematic overestimation of spread size $O^{(t)}_{\mathrm{INMFA}} = \frac{1}{N} \sum_i \left(I_i^{(t)} + R_i^{(t)}\right)$. Despite this, it offers good approximation near the critical regime and is computationally efficient.

\item \textbf{Entropy Based \cite{michalski2020entropy}:}This method quantifies the temporal diversity of a node’s neighborhood using entropy. For a node $v_i$, the variability is defined as:
    \[
    \mathrm{Var}(v_i) = -\sum_{n=1}^{N-1} p(\mathrm{NB}_{v_i,n,n+1}) \log p(\mathrm{NB}_{v_i,n,n+1}),
    \]
    where $p(\mathrm{NB}_{v_i,n,n+1})$ is the probability of a specific neighbor configuration across consecutive time windows, and $|e_{ij}|$ denotes the total unique edges from $v_i$. This node-level metric captures dynamic neighborhood changes efficiently, making it scalable for large temporal networks. Top-ranked nodes by entropy are selected as seeds, typically during the central window of the timeline.

\item \textbf{TBCELF \cite{tbcelf}:}TBCELF is a budget-aware influence maximization approach for temporal networks that combines cost-effective lazy forward optimization with temporal seed selection. It ensures high-quality seed selection within budget constraints

\end{itemize}

\subsection{Experimental Results}

We conduct a comprehensive evaluation of our proposed temporal influence maximization method using all the datasets discussed above. The key metrics analyzed are infection spread and execution time across varying seed set sizes. All infection spread and seed set values are reported as percentages for uniform comparison.

We begin by evaluating our method on the Primary School dataset, divided into seven temporal snapshots, each with approximately 1000 timestamps. Figure~\ref{snapshot-viz} visualizes five of these snapshots. We tested our method under two diffusion models: SIR and cpSI-R, varying the seed set size as $k = 10, 20, 30$. From the results illustrated in Figure~\ref{fig:model-compare}, we observe that during the early evolution of the temporal network, both models perform similarly. However, in later snapshots—particularly snapshot 7—the cpSI-R model significantly outperforms SIR in terms of spread. This can be attributed to cpSI-R’s reactivation mechanism, which allows previously recovered nodes to become active again with a fading probability. Such reactivated nodes, often situated in densely connected regions, initiate stronger cascades, resulting in higher infection spreads. Notably, the performance gap between cpSI-R and SIR widens as the seed set size increases.
\begin{figure}[htbp]
    \centering
    \includegraphics[width=0.99\textwidth]{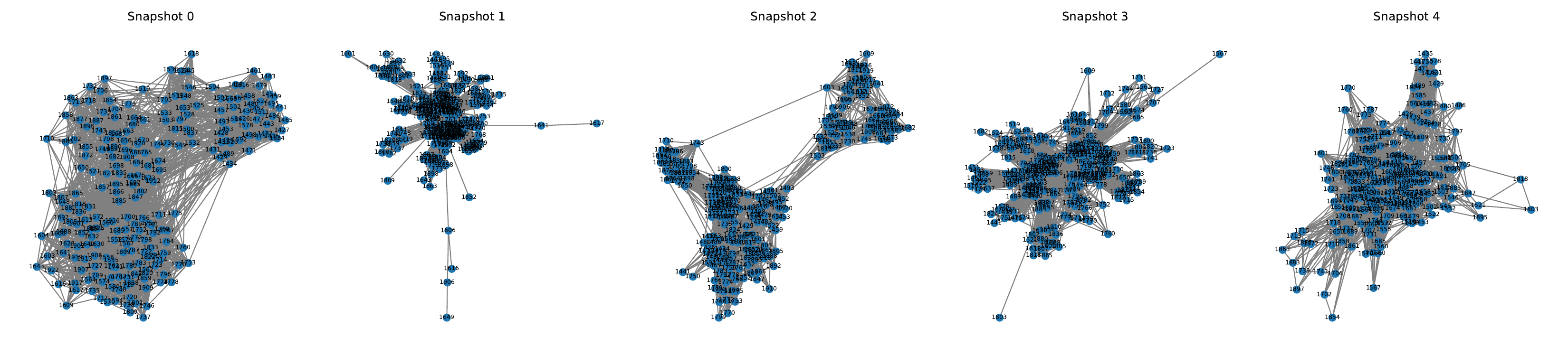}
    \caption{ Five snapshots of a Primary School temporal network sampled at 1.1K timestamps in each snapshot.}
    \label{snapshot-viz}
\end{figure}

\begin{figure}[htbp]
    \centering
    \includegraphics[width=0.99\textwidth]{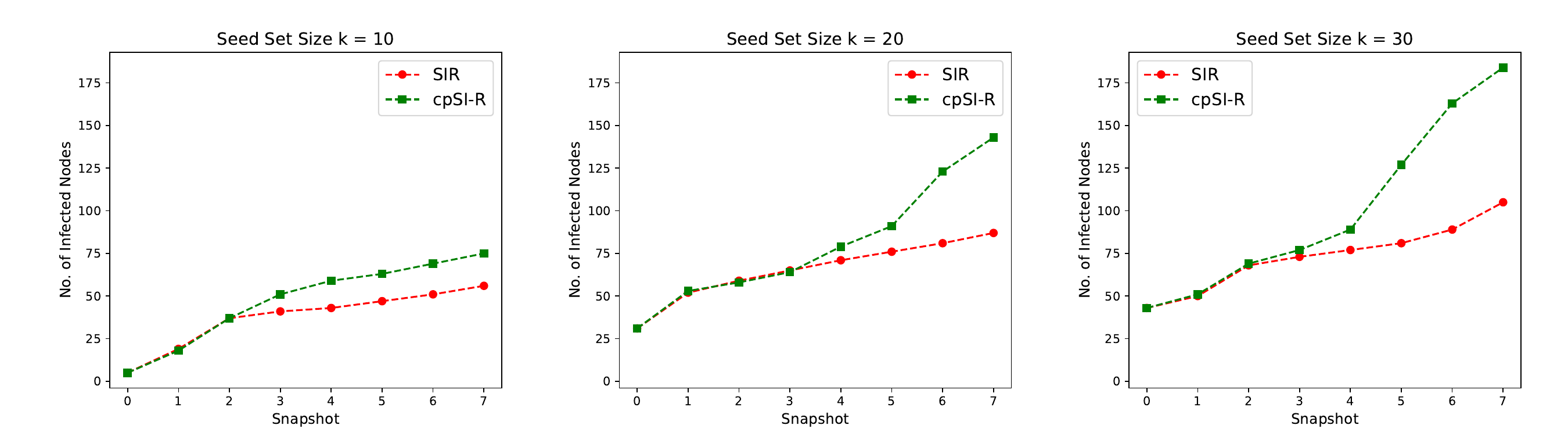}
    \caption{Advantage of cpSI-R model over SIR in terms of infection spread on Primary school dataset across various snapshots and seed set size}
    \label{fig:model-compare}
\end{figure}

To further illustrate the impact of reactivation, we designed a small-scale experiment on a toy temporal network (Figure~\ref{fig:toy-network}), comprising five evolving snapshots from $t = 0$ to $t = 4$. This network simulates a growing social system, where nodes and edges increase over time. Our results on this setup show that cpSI-R achieves an infection spread of 83.3\%, compared to 69.07\% with SIR. The reason is that, under cpSI-R, recovered nodes can re-engage in influence diffusion if they receive sufficient reinforcement, unlike SIR where such nodes become permanently inactive.
\begin{figure}[htbp]
    \centering
    \includegraphics[width=0.95\textwidth]{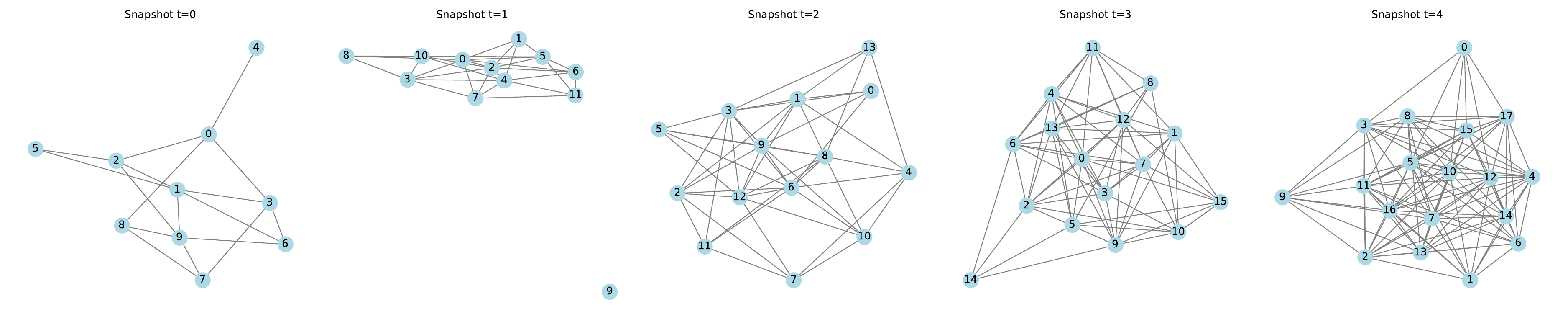}
    \caption{Toy temporal network illustrating the advantage of cpSI-R model over SIR in densely connected environment.}
    \label{fig:toy-network}
\end{figure}

In Figure~\ref{fig:eta-varying}, we examine the effect of the sampling parameter $\eta$ on infection spread across all datasets. We find that spread is maximized at $\eta = 0.7$ and $0.8$, and drops sharply for lower values. This implies that as we incorporate more temporal dynamics (lower $\eta$), the influence spread diminishes. Conversely, as we move towards a static setting (higher $\eta$), the number of infections increases. To determine the optimal $\eta$, we also evaluated execution time (Figure~\ref{fig:eta-time}). Results show a steep increase in time cost beyond $\eta = 0.7$, confirming that denser snapshots require more computation. Thus, we fix $\eta = 0.7$ for the rest of our experiments, balancing spread and runtime efficiency.
\begin{figure}[H]  
    \centering
    \includegraphics[width=0.95\textwidth]{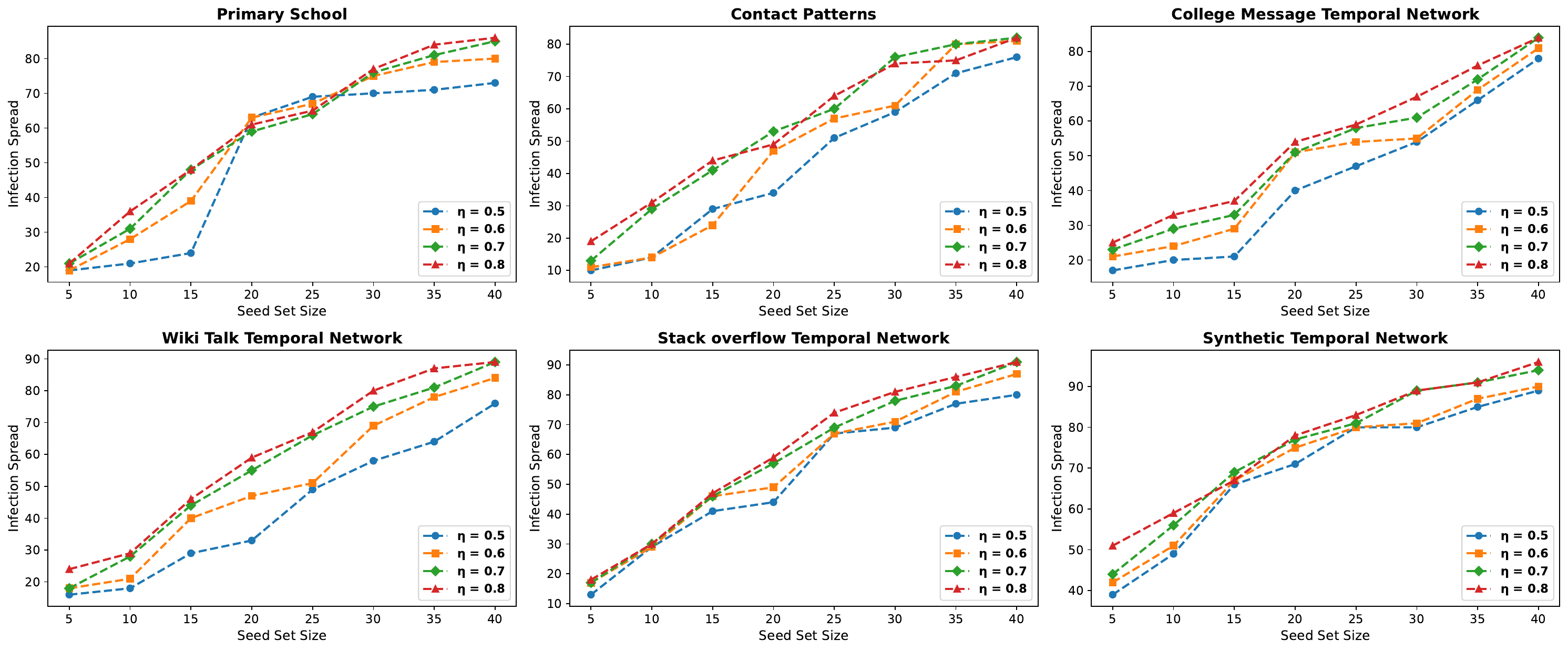}
    \caption{Comparative infection spread across multiple datasets for different values of $\eta$ using our method and cpSI-R model.}
    \label{fig:eta-varying}
\end{figure}

\vspace{-5mm}  

\begin{figure}[H]
    \centering
    \includegraphics[width=0.95\textwidth]{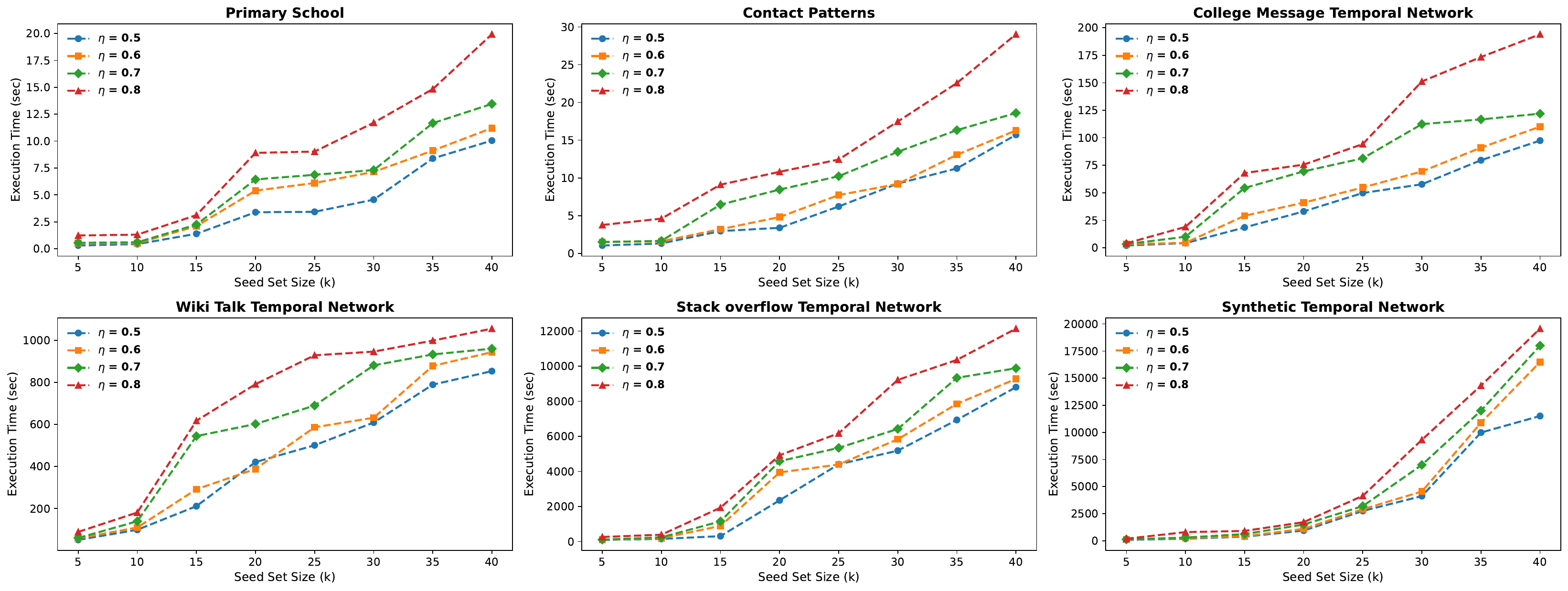}
    \caption{Comparative execution time (sec) across multiple datasets for our influence maximization method and cpSI-R model on changing $\eta$ values.}
    \label{fig:eta-time}
\end{figure}

\FloatBarrier

Using this optimal configuration, we compared our method to all baseline approaches. All methods use the cpSI-R model for fair comparison. As shown in Figure~\ref{fig:spread-compare}, our method consistently achieves higher infection spread, particularly for larger seed set sizes and in large-scale datasets like WikiTalk and StackOverflow. In Figure~\ref{fig:time-compare}, we observe that due to our efficient temporal sampling and optimization, our method also yields lower execution time as seed set size increases, outperforming all baselines across all datasets.
\begin{figure}[H]
    \centering
    \includegraphics[width=0.95\textwidth]{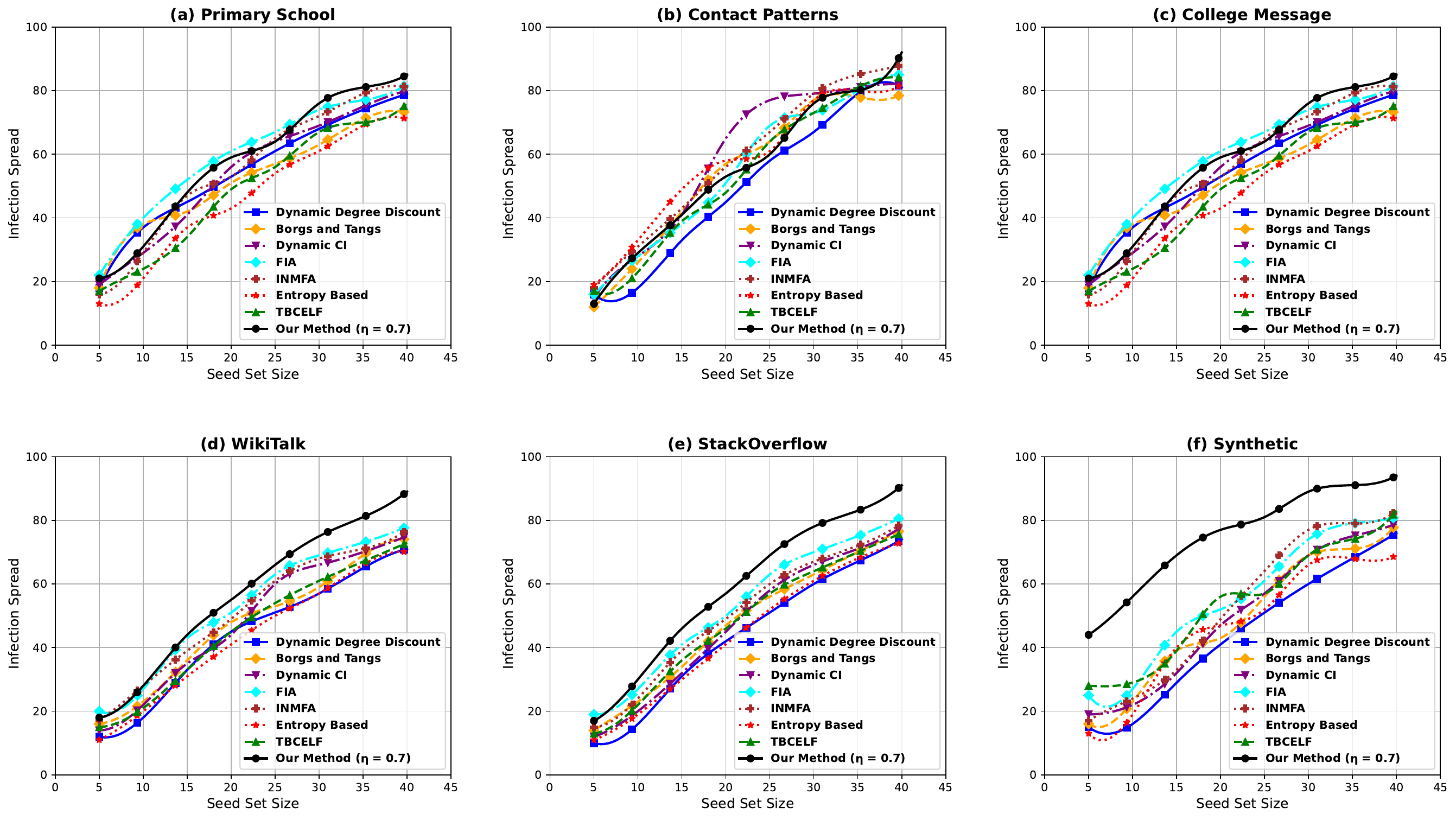}
    \caption{Comparative infection spread across multiple datasets for different influence maximization methods. Our method ($\eta=0.7$) consistently outperforms others across all settings.}
    \label{fig:spread-compare}
\end{figure}

\vspace{-5mm} 

\begin{figure}[H]
    \centering
    \includegraphics[width=0.95\textwidth]{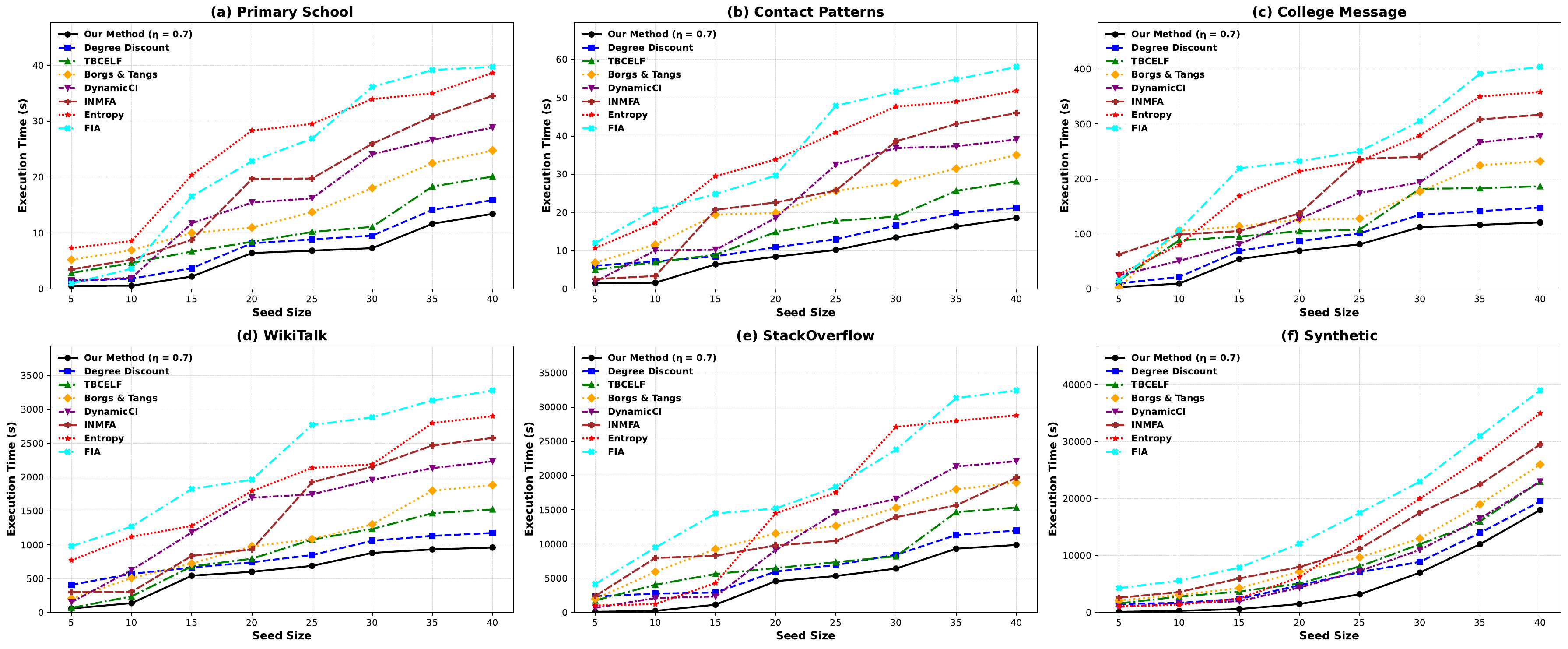}
    \caption{Execution time in seconds across multiple datasets for different influence maximization methods. Our method ($\eta=0.7$) consistently outperforms others across all settings.}
    \label{fig:time-compare}
\end{figure}

\section{Discussion}
\label{discuss}

The experimental evaluation validates the robustness and practicality of the proposed cpSI-R model and its corresponding influence maximization framework. The key advantage of cpSI-R lies in its incorporation of behavioral dynamics, specifically, temporary disengagement and reactivation of nodes, as well as reinforcement through persistent exposure. These features allow the diffusion process to better mirror real-world influence dynamics where individuals may repeatedly re-engage with information over time.

The comparative study between cpSI-R and the classical SIR model (Figure~\ref{fig:model-compare}) demonstrates a marked improvement in influence spread, especially as the diffusion progresses through time. This substantiates the claim that reactive and persistent behaviors enhance the ability of the model to sustain influence propagation in temporal settings.

A pivotal element of our framework is the structure-aware adaptive temporal sampling strategy governed by the parameter $\eta$. The results indicate that $\eta = 0.7$ strikes the optimal balance between capturing local bursty interactions and global temporal dependencies. At this value, the infection spread is maximized across datasets while maintaining manageable computational cost (Figures~\ref{fig:eta-varying} and~\ref{fig:eta-time}). This suggests that $\eta = 0.7$ offers a principled trade-off between accuracy and scalability.

When benchmarked against existing influence maximization methods (Figures~\ref{fig:spread-compare} and~\ref{fig:time-compare}), our method consistently outperforms in terms of infection spread across all seed sizes and network types. Importantly, this performance gain does not come at the cost of excessive runtime. While more computationally involved than simple heuristics, our approach is significantly more efficient than information-theoretic and probabilistic baselines, especially on large-scale networks.

We note, however, that on smaller temporal networks, our method may occasionally trail lightweight heuristics in runtime or marginal spread. This is primarily because smaller graphs typically exhibit low structural variability across snapshots, making simple greedy or degree-based heuristics highly effective without requiring sophisticated sampling or behavioral modeling. Additionally, the overhead introduced by temporal sampling and reinforcement tracking in cpSI-R may not yield substantial marginal gains when the number of nodes and edges is limited. Nevertheless, as the network size and temporal granularity increase, our framework scales efficiently and its advantages become more pronounced. These findings reinforce the central premise of this work: combining behavioral modeling (via cpSI-R), temporal awareness, and submodular optimization yields a scalable and effective influence maximization strategy that is well-suited to the complex dynamics of real-world temporal networks.

\section{Conclusion and Future Work}
\label{conclusion}

The cpSI-R model introduces a substantial advancement in influence maximization over temporal social networks by incorporating key behavioral dynamics such as persistence and time-limited reactivation. These mechanisms allow cpSI-R to more accurately model real-world diffusion processes where individuals may temporarily disengage and later rejoin due to renewed influence. The theoretical guarantees of monotonicity and submodularity make the model well-suited for optimization, enabling the design of more effective seed selection strategies across varying network conditions. To operationalize cpSI-R, we developed an efficient influence maximization framework, including a tailored seed selection mechanism and a novel temporal sampling algorithm. Our comprehensive experiments on  real-world temporal datasets and synthetic dataset demonstrate that the proposed approach not only achieves consistently higher influence spread but also offers superior computational efficiency. These improvements are particularly evident on large-scale networks, where our method significantly outperforms competitive baselines in both spread quality and running time.

Looking ahead, we aim to enhance the adaptability of cpSI-R to a wider range of temporal network structures, including those characterized by heterogeneous interaction frequencies or irregular temporal patterns. Future research will also explore adaptive and real-time seed selection strategies that can respond dynamically to evolving network states. Additionally, extending cpSI-R to practical domains such as epidemic control, political mobilization, and targeted marketing will further validate its robustness and applicability. Interdisciplinary collaboration will be instrumental in broadening the theoretical scope and real-world impact of the cpSI-R framework.

\newpage
\bibliographystyle{ieeetr} 
\bibliography{Refr}
\end{document}